\numberwithin{equation}{section} 
\numberwithin{table}{section}
 \newtheorem{lemma}{Lemma}[section]
 \newtheorem{prop}{Proposition}[section]
  \newtheorem{remark}{Remark}[section]
\newtheorem{theorem}{Theorem}[section]
\newtheorem{example}{Example}[section]
\begin{document}

\newcommand{\rz}{{\mathbb{R}}}
\newcommand{\nz}{{\mathbb{N}}}
\newcommand{\zz}{{\mathbb{Z}}}
\newcommand{\eps}{\varepsilon}
\newcommand{\cei}[1]{\lceil #1\rceil}
\newcommand{\flo}[1]{\left\lfloor #1\right\rfloor}
\newcommand{\opt}{\mbox{\sc OPT}}
\newcommand{\sbb}{\mbox{sub-block }}

\title{The Multi-level Bottleneck Assignment Problem: Complexity and Solution Methods}

\author[1]{Trivikram Dokka}
\author[2]{Marc Goerigk}

\affil[1]{ Department of Management Science, Lancaster University, United Kingdom
  (\texttt{t.dokka@lancaster.ac.uk})}
\affil[2]{Network and Data Science Management, University of Siegen, Germany
 (\texttt{marc.goerigk@uni-siegen.de})}

\date{}

\maketitle

\begin{abstract}
We study the multi-level bottleneck assignment problem (MBA), which has important applications in scheduling and  quantitative finance. Given a weight matrix, the task is to rearrange entries in each column such that the maximum sum of values in each row is as small as possible. We analyze the complexity of this problem in a generalized setting, where there are restrictions in how values in columns can be permuted. We present a lower bound on its approximability by giving a non-trivial gap reduction from three-dimensional matching to MBA. 

To solve MBA, a greedy method has been used in the literature. We present new solution methods based on an extension of the greedy method, an integer programming formulation, and a column generation heuristic. In computational experiments we show that it is possible to outperform the standard greedy approach by around 10\% on random instances.

\emph{Keywords:} combinatorial optimization; bottleneck assignment;
approximation; computational complexity
\end{abstract}


\section{Introduction}

\subsection{Problem definition and state-of-the-art}
\label{lit_review}

The following axial assignment problem arises to scheduling, rostering and finance applications: Given  are $m$ pairwise disjoint sets $S_1, S_2, \ldots, S_m$ each of cardinality $n$, and a weight $w(s) \in \nz$ for each $s \in S$  where $S = \cup_{i\in[m]} S_i$ and $[m]:=\{1,\ldots,m\}$. The set $S$ can be seen as the node-set of an $m$-partite graph that has a given set of arcs $E=\bigcup_{i\in[m-1]} E_i$, where $E_i=\{(u,s)|~u \in S_i, s \in S_{i+1} \}$ connects nodes from $S_i$ with nodes from $S_{i+1}$. An $m$-tuple $D=(s_1, s_2, \ldots, s_m)$ is feasible if $s_i \in S_i$ for $i \in[m]$ and $(s_i,s_{i+1}) \in E$. The weight of an $m$-tuple $D$ equals $w(D) = \sum_{s \in D} w(s)$. The problem is to find a partition of $S$ into $n$ feasible $m$-tuples $D_1, D_2, \ldots, D_n$ such that $\mbox{max}_{j\in[n]} w(D_j)$ is as small as possible. We refer to this partition of $S$ into $\{D_1, D_2, \ldots, D_n\}$ as a solution $M$, and the weight $w(M)$ of a solution $M=\{D_1, D_2, \ldots, D_n\}$ equals $\mbox{max}_{j\in[n]} w(D_j)$. This problem is known as the multi-level bottleneck assignment problem (MBA). It is often seen through the lens of column permutation in a matrix (under constraints), such that the maximum row sum is minimized. For the sake of clarity, since we discuss column generation based algorithms in Section~\ref{sec:methods}, we stick to the convention of sets and tuples in referring to MBA.

In the following, we discuss the motivation and current literature on this problem. There are two main application areas: scheduling, and finance.

MBA was first introduced and studied by Carraresi and Gallo~\cite{cargal}, motivated by an application in bus driver scheduling. Special cases of the problem have been studied even before \cite{cargal}. 
A particularly important special case which we call {\em complete-MBA}, as referred to in Dokka et al.~\cite{DKS2012}, is when each $E_i$ is complete. 
The approximability of this special case has been studied by Hsu~\cite{hsu} and by Coffman and Yannakakis~\cite{cofyan}. For complete-MBA, Hsu~\cite{hsu} gave an ($2-\frac{1}{n}$)-approximation algorithm that runs in $O(m n \mbox{log}n)$, while Coffman and Yannakakis~\cite{cofyan}  gave an ($\frac32 - \frac{1}{2n}$)-approximation algorithm that runs in $O(n^2m)$. For the case where $m=3$, Hsu gave a $\frac32$-approximation algorithm that runs in $O(n \mbox{log} n)$, and a $\frac43$-approximation algorithm that runs in $O(n^3\mbox{log}n)$. 

Another important problem that MBA contains as a special case is the bi-criteria scheduling problem in which one tries to find a schedule with minimum makespan over all flow time optimal schedules on identical machines. This problem was first studied in 1976 by Coffman and Sethi~\cite{cofsethi}, where a $\frac54$ approximation algorithm is given. Eck and Pinedo~\cite{pinedo} give a $\frac{28}{27}$ approximation for the two machine case. More recently Ravi et al.~\cite{ravi} prove the Coffman and Sethi conjecture on the performance of (a natural extension) of the longest processing time algorithm to this bi-criteria scheduling problem. We note that the greedy algorithm often used for MBA when applied to the above special case of bi-criteria scheduling can be interpreted as an extension of the longest processing time list scheduling algorithm.

MBA is also connected to parallel machine scheduling with bags, see Das and Wiese~\cite{Das_ESA17} and  Page and Oba~\cite{page_solis-Oba_AAIM2018}. In this problem jobs which belong to the same bag cannot be scheduled on the same machine, while in our case we have the restriction that only jobs connected by an edge in the underlying layered graph can be scheduled together on same machine. 

\smallskip

The problem also has important applications in quantitative finance where one needs to infer the stochastic dependence between many random variables. More specifically, we are given a set of random variables $L_i$, $i\in[m]$, with known marginal distributions. However, little or nothing is known about the dependence structure between these variables and about the distribution of an aggregate random variable $L$ obtained, for example, by summing all $L_i$. Understanding the distribution of $L$ or inferring the dependence between $L_i$ has important implications in practice, such as a better understanding of the overall system risk for risk managers, or a better idea of overall portfolio risk for portfolio managers. In Embrechts et al.~\cite{embrechts_JBF13} an application of MBA is given in estimating the upper and lower bounds of the Value-at-Risk over all possible dependence structures. The problem is particularly related to the concepts of complete mixability and joint mixability, see Wang and Wang~\cite{wang2_JMA11, wang2_MOR16}; and to the minimum variance problem, see Ruschendorf~\cite{Ruschendorf1983_metrika}. Bernard et al.~\cite{bernard2018_aor} study the problem in inferring the dependence among variables $L_i$. 
In Puccetti and Ruschendorf~\cite{puccetti_JCAM12} and Hsu~\cite{hsu} a natural greedy heuristic, which is called \emph{Rearrangement Algorithm (RA)}, was studied and analysed. Variants of RA have received much attention in the recent years, see Boudt et al.~\cite{boudt2015} and Bernard et al.~\cite{bernard2015a}. Within this domain of research, more recently, Bernard et al.~\cite{bernard2018_wp1807} investigate joint distributions when not just marginals of individual random variables but also marginal distributions of a linear combinations of subsets of variables is known. Our work in this paper deals with more general case compared to the case studied in Bernard et al.~\cite{bernard2018_wp1807}.

As far as we are aware, all known approximation results except Dokka et al.~\cite{DKS2012} deal with the complete (all $E_i$ are complete bipartite). In this paper we deal with a more general setting, namely the case where the edge set between $S_i$ and $S_{i+1}$ can be arbitrary (and not necessarily complete)  for $i\in[m-1]$. 

Related problems have also been studied from an approximation point of view. For example, one can see MBA as a generalization of the classical multi-processor scheduling problem with incompatibilities between jobs. Such related problems have been studied in Bodlaender et al.~\cite {bodjanwoe}. Other types of (three-dimensional) bottleneck assignment problems have been studied by Klinz and Woeginger~\cite{kliwoe} and Goossens et al.~\cite{goopol}. 

\subsection{Research questions and contributions}

While considerable work is dedicated to understanding the complexity of MBA when all edge sets are complete, not much is known about the case when the edge sets are arbitrary. The only known result is from Dokka et al.~\cite{DKS2012}, which gives a lower bound of $2\cdot OPT$ on approximability in the case when $m=3$ and shows a simple greedy approach to achieve a matching upper bound. Complete-MBA is shown to admit a PTAS, first shown in \cite{DKS2012}, while the simple greedy approach is already known to yield a 2-approximation in Hsu~\cite{hsu}. Given that the greedy approach gives a constant factor approximation in the complete case, it is tempting to believe a similar approach may be used to construct a constant factor approximation for the general case, which leads to our first question. 

\smallskip

\textit{Question 1:} Is there a polynomial time algorithm which approximates MBA to within a constant factor of the optimum objective value?

\smallskip
The simple greedy algorithm, however bad in the worst case, may still be desirable owing to the existence of efficient algorithms for the bottleneck assignment problem. On the other hand integer programming solvers such as Cplex have undergone massive improvements in the last years, which led to efficient practical heuristics based on mathematical programming formulations for some hard problems. So we ask the following second question.

\smallskip

\textit{Question 2:} Can the performance of the basic greedy algorithm be improved, and how does it compare with more advanced matheuristics based on integer programming (IP) formulations?


We answer the above research questions with the following results:

\begin{itemize}
\item  We show that the answer to first question is NO under the assumption of $P\neq NP$ by giving a non-trivial gap reduction from three-dimensional matching (3DM). More specifically, we prove that the existence of a $\{(u+1)-\epsilon\}$-polynomial time approximation algorithm for MBA with $m=3u$ implies $P=NP$.
\item We show that extensions of the greedy method in combination with mathematical programming techniques can lead to significantly better solutions within the same time limit than when using the standard greedy method.
\end{itemize}

The rest of the paper is structured as follows. We give the inapproximability result for the general case in Section~\ref{inapp}. In Section~\ref{sec:methods}, we present an integer programming formulation along with a greedy and a column generation heuristic. These methods are compared experimentally using random MBA instances in Section~\ref{sec:experiments}. Section~\ref{sec:conclusions} summarizes our findings and points out further research questions.

\section{Inapproximability of the arbitrary case}\label{inapp}

In Hsu~\cite{hsu} it is shown that for complete-MBA the natural sequential heuristic achieves a 2-approximation. It is tempting to believe that this may be true even in the arbitrary case. We show that MBA for a fixed $m>3$ cannot be approximated within a factor of $\lfloor\frac{m}{3}\rfloor+1$ unless $P=NP$.
To do so, we show that a YES-instance of
3-dimensional matching (3DM) corresponds to an instance of MBA with weight 1, whereas a
NO-instance corresponds to an instance of our problem with weight $\frac{m}{3}+1$. Then, a
polynomial time approximation algorithm with a worst case ratio strictly less than $\frac{m}{3}+1$
would be able to distinguish the YES-instances of 3DM from the NO-instances,
and this would imply $P = NP$.

Let us first recall the 3-dimensional matching problem:

Instance: Three sets $X = \{x_1, . . . , x_q\}$, $Y = \{y_1, . . . , y_q\}$, and $Z =
\{z_1, . . . , z_q\}$, and a subset of tuples $T \subseteq X \times Y \times Z$.

Question: does there exist a subset $T'$ of $T$ such that each element of $X \cup Y \cup Z$ is in exactly one triple of $T'$?\\

Let the number of triples be denoted by $|T| = p$. Further, let the number of triples in which element $y_{\ell}$ occurs be denoted by $\#occ(y_{\ell})$, $\ell=1, \ldots, q$.

Starting from arbitrary instance of 3DM, we now build a corresponding instance of MBA by specifying sets $S_i$, edges $E$, and the weights $w$. Before we explain the construction we first explain the basic building blocks and gadgets which are pieced together to form a MBA instance.

\begin{example}
To illustrate our construction, we use the following example instance of 3DM:
\begin{align*}
& X = \{ x_1,x_2\},\ Y=\{y_1,y_2\},\ Z=\{z_1,z_2\} \\
& T = \{ t_1, t_2, t_3\} \\
& t_1=(x_1,y_1,z_1),\ t_2=(x_2,y_2,z_2),\ t_3=(x_1,y_2,z_1) \\
& q = 2,\ p = 3,\ d = p-q = 1 \\
& \#occ(y_1) = 1,\ \#occ(y_2) = 2
\end{align*}
\end{example}

\subsection{Building sub-blocks}

There are two types of nodes in the resulting MBA instance, which we call \emph{main} and \emph{dummy} nodes. 

The main nodes in each set in the MBA instance are partitioned into sub-blocks of nodes. Each \sbb is of cardinality $q$, $p$, or $d :=p-q$. We use the following types:
\begin{itemize}
\item $X$-sub-blocks, where each node corresponds to one element in $X$ (cardinality $q$)
\item $Z$-sub-blocks, where each node corresponds to one element in $Z$ (cardinality $q$)
\item $Y$-sub-blocks, where $\#occ(y_i)-1$ many nodes correspond to one of each element $y_i\in Y$ (cardinality $d$)
\item $T$-sub-blocks, where each node corresponds to one triple in $T$ (cardinality $p$)
\end{itemize}

In our construction, we may refer to two sub-blocks as being \emph{connected}. When this is the case, the corresponding edge set depends on the type of sub-blocks:
\begin{itemize}
\item $X$-sub-blocks are connected to $T$-sub-blocks by connecting a node in the $X$-sub-block corresponding to an element $x_i\in X$ with those nodes in the $T$-sub-block corresponding to tuples that contain $x_i$
\item $Y$- and $Z$-sub-blocks are connected to $T$-sub-blocks in the same way
\item two $T$-sub-blocks are connected by connecting each two nodes corresponding to the same tuple
\end{itemize}

The role of the dummy nodes and their exact number will be apparent when we explain the connections between gadgets in the later sections. 

\begin{example}[continued]
In our 3DM example, $X$- and $Z$- sub-blocks have $2$ nodes each. A $Y$-sub-block has only one node, corresponding to $y_2$. $T$-sub-blocks have three nodes. Figure~\ref{fig1} shows the different possibilities how these sub-blocks can be connected.

\begin{figure}[htb]
\begin{center}
\begin{subfigure}[t]{0.3\textwidth}
\includegraphics[width=\textwidth]{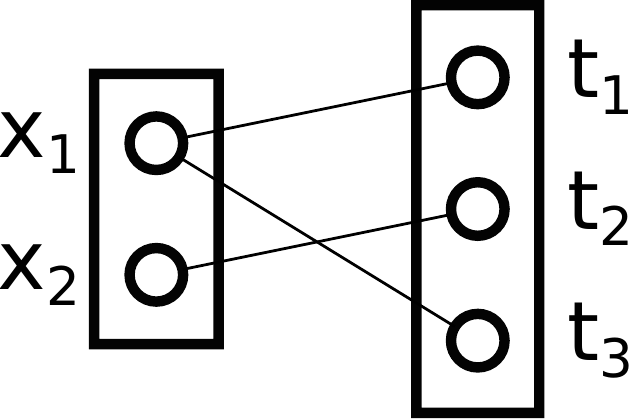}
\caption{Connecting $X$- and $T$-sub-blocks.}\label{fig1-1}
\end{subfigure}
\hspace*{1cm}
\begin{subfigure}[t]{0.3\textwidth}
\includegraphics[width=\textwidth]{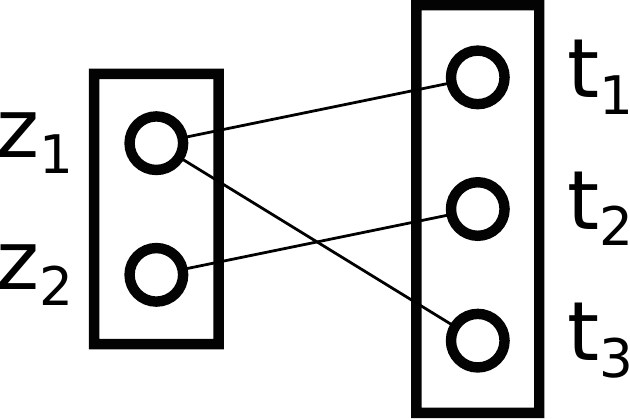}
\caption{Connecting $Z$- and $T$-sub-blocks.}\label{fig1-2}
\end{subfigure}\\
\begin{subfigure}[t]{0.3\textwidth}
\includegraphics[width=\textwidth]{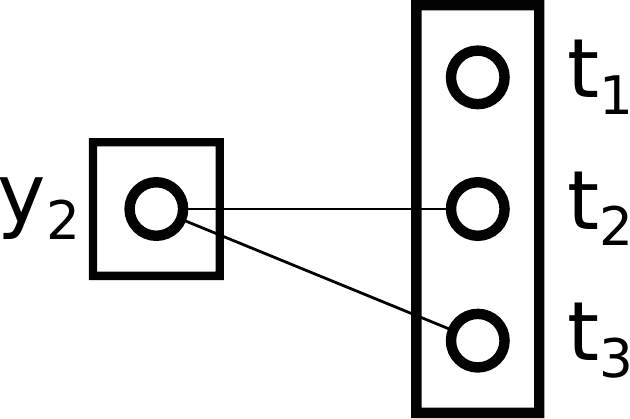}
\caption{Connecting $Y$- and $T$-sub-blocks.}\label{fig1-3}
\end{subfigure}
\hspace*{1cm}
\begin{subfigure}[t]{0.3\textwidth}
\includegraphics[width=\textwidth]{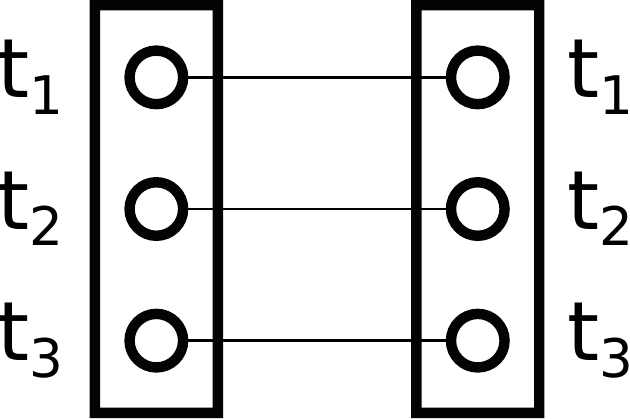}
\caption{Connecting $T$- and $T$-sub-blocks.}\label{fig1-4}
\end{subfigure}
\caption{Connecting sub-blocks in the example.}\label{fig1}
\end{center}
\end{figure}
\end{example}

\subsection{Gadget construction}
The construction is mainly based on two gadgets $G_0$ and $G_1$, which are both MBA instances with $m=3$. Each gadget consists of multiple blocks $j$, where each block is made up of three sets $V_1^{\#,j}$, $V_2^{\#,j}$, and $V_3^{\#,j}$ for $G_\#$, $\#\in\{0,1\}$. 

\paragraph{Blocks and connections within blocks:}
Each block $j$ of $G_{\#}$, $\# \in \{0,1\}$, has three times $2p$ nodes grouped as follows:
\begin{itemize}
\item in $V_1^{\#,j}$, there is a $Z$-sub-block, a $Y$-sub-block, and a $T$-sub-block
\item in both $V_2^{\#,j}$ and $V_3^{\#,j}$, there are two $T$-sub-blocks
\end{itemize}
The head $T$-sub-block in  $V_2^{\#,j}$ is connected to the $Z$-sub-block and $Y$-sub-block in  $V_1^{\#,j}$, and to the head $T$-sub-block in $V_3^{\#,j}$. The tail $T$-sub-block in  $V_2^{\#,j}$ is connected to the tail $T$-sub-block in $V_1^{\#,j}$, and to the head and tail $T$-sub-blocks in $V_3^{\#,j}$. 

\begin{example}[continued]
Figure~\ref{blocks_new} illustrates how blocks $V_1^{\#,j}$, $V_2^{\#,j}$, and $V_3^{\#,j}$ are constructed. 
\begin{figure}[htb]
\begin{center}
\begin{subfigure}[t]{0.4\textwidth}
\includegraphics[width=\textwidth]{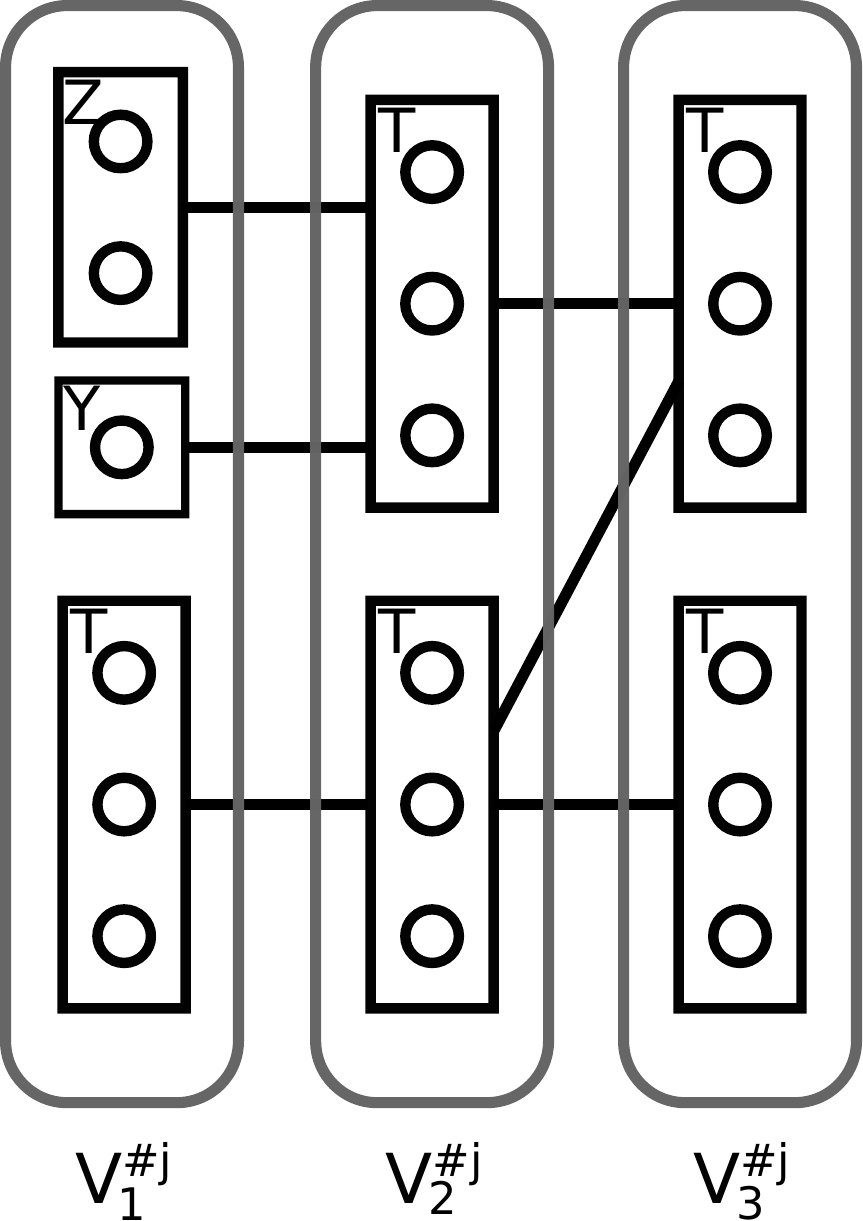}
\caption{Illustration of $j^{th}$ block in $G_0$ and $G_1$. A single connection represents multiple edges as indicated in Figure~\ref{fig1}.}\label{blocks_new}
\end{subfigure}
\hfill
\begin{subfigure}[t]{0.4\textwidth}
\includegraphics[width=\textwidth]{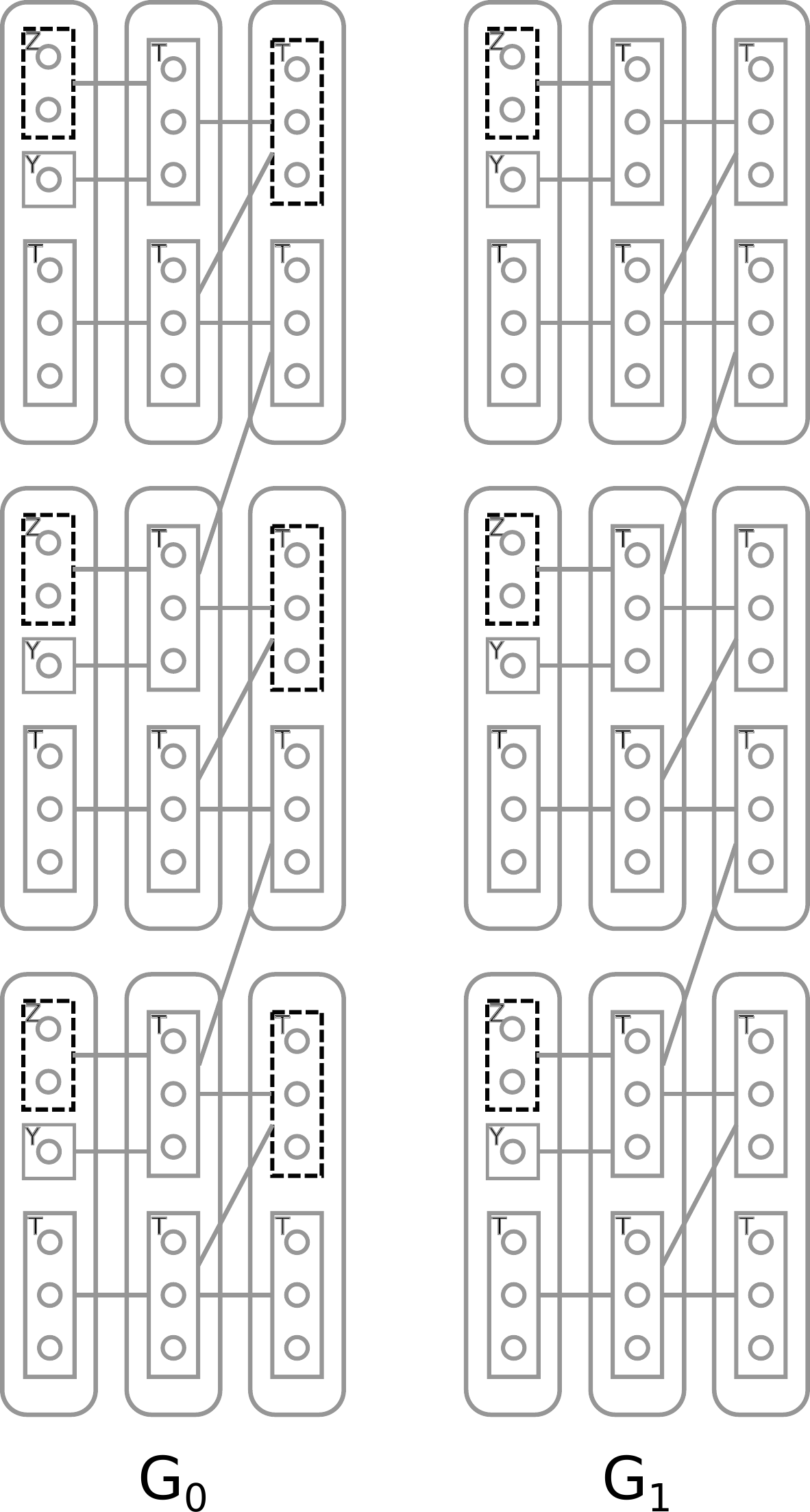}
\caption{Connections between blocks. The highlighted sub-blocks have nodes with node weights equal to $1$.}\label{blocks_new2}
\end{subfigure}
\caption{Illustration of block construction and connection.}
\end{center}
\end{figure}
\end{example}

\paragraph{Connections between blocks:} 
We denote by $height$ the number of blocks in a gadget. 
For $1\leq j < height$, the tail $T$-\sbb in $V_3^{\#,j}$ is connected to head $T$-\sbb of $V_2^{\#,j+1}$. 
Note that this way, only adjacent blocks are connected by edges.

\paragraph{Weights:}
Note that in the construction so far, gadgets $G_0$ and $G_1$ are identical. They differ with respect to their weights.
The head $T$-sub-block in $V_3^{0,j}$ and the $Z$-sub-block in $V_1^{0,j}$ and $V_1^{1,j}$ have all nodes weight equal to 1 for all $j$. All other nodes are weighted 0. 

\begin{example}[continued]
Figure~\ref{blocks_new2} shows how blocks are connected. Sub-blocks with weight 1 are highlighted with dashed lines.
\end{example}

\subsection{Main Construction}

\subsubsection{Overview}

We intend to show that it is not possible to approximate MBA within $\lfloor\frac{m}{3}\rfloor+1$ of optimum, for every fixed $m$. The exact structure of MBA instance constructed from 3DM depends on $m$. We assume that $m=3u$ for an integer $u$. We put together $(u-1)$ many $G_1$ gadgets and one $G_0$ gadget, each of different height, in that order to create an MBA instance. Since each gadget is itself an MBA instance with three columns, the resulting instance is a MBA instance with $m=3u$. For the ease of explanation we refer to each triple $(S_i, S_{i+1}, S_{i+2})$ as a layer, $i=1, 4, 7, \ldots, 3u-2$. The number of layers is equal to $u$. While we count the sets starting from left to right, we count layers from right to left. That is, layer $k+1$ is placed to the left of layer $k$. The height of the gadget in layer $k\in[u]$ is equal to $q^{k-1}p^{u-k}$ blocks. Before we explain how these gadgets are connected  in sequence we need some additional blocks of nodes as follows.

\begin{example}[continued]
In our example, we would like to use the 3DM instance to construct an MBA instance with $m=6$. This means that $u=2$. Hence, we have one $G_1$ gadget and one $G_0$ gadget. The height of the first ($G_0$) gadget is $q^0p^1 = 3$, and the height of the second ($G_1$) gadget is $q^1p^0 = 2$.
\end{example}

\subsubsection{Non-gadget nodes}

\paragraph{$X$-sub-blocks:}
We have one $X$-\sbb in every $S_i$ of each layer $k$. These are connected as follows:
\begin{itemize}
    \item the $X$-\sbb in $S_{3k}$ is connected to the head $T$-\sbb of $V_2^{\#,1}$ (first block of $G_{\#}$ in the $k^{th}$ layer) in $S_{3k-1}$,
     \item the $X$-\sbb in $S_{3k-1}$ is connected to the tail $T$-sub-block of $V_3^{\#,height}$ (in the last block of $G_{\#}$ in the $k^{th}$ layer) in $S_{3k}$,
      \item the $X$-\sbb in $S_{3k-2}$ is connected to the $X$-\sbb of $S_{3k-1}$ element-wise,
\end{itemize}
where \# takes a value 0 in layer $u$ and 1 for all other layers.

\paragraph{Dummy blocks:}
Each $S_i$ has an additional set of nodes apart from gadget and non-gadget sub-blocks. We refer to these nodes as \textit{dummy} nodes. The number of dummy nodes in layer $k$ is equal to sum of non-dummy nodes in all other layers. The dummy nodes in layer $k$ are evenly distributed over the three columns $S_{3k}$, $S_{3k-1}$ and $S_{3k-2}$. All dummy nodes in $S_{3k}$ are connected to all dummy nodes in $S_{3k-1}$, and all dummy nodes in $S_{3k-1}$ are connected to all dummy nodes in $S_{3k-2}$. All dummy nodes have a weight equal to 0.

\subsection{Connecting gadgets}

There are two types of edges connecting gadgets: Edges to dummy nodes, and edges between $Z$- and $T$-sub-blocks. Edges to dummy nodes in layer $k$ are constructed as follows:
\begin{itemize}
\item Every node from every $Y$- and $T$-sub-block in $S_{3k-2}$ is connected to every dummy node in $S_{3(k+1)}$ of the neighboring layer $k+1$.
\item Every node from every bottom $T$-sub-block in $S_{3k}$ is connected to every dummy node in $S_{3(k-1)-2}$ of the neighboring layer $k-1$.
\end{itemize}

Nodes in $Z$-sub-blocks of layer $k$ are connected to $T$-sub-blocks in $S_{3(k+1)}$ of
next layer $k+1$ in the following way. Blocks in layer $k$ are grouped into $p^{u-k}$ groups of $q^{k-1}$ blocks each. In each group, there are thus $q^{k-1}$ $Z$-sub-blocks with $q$ nodes each. 

Figure~\ref{zex} provides an example with $q=2$, $p=3$, and $u=3$. Layer $k=1$ has 9 blocks that are paritioned into 9 groups of one block each. Layer $k=2$ has 6 blocks that are partitioned into 3 groups of 2 blocks each. Finally, layer $k=3$ has 4 blocks that form a single group.

\begin{figure}[htb]
\begin{center}
\includegraphics[width=0.4\textwidth]{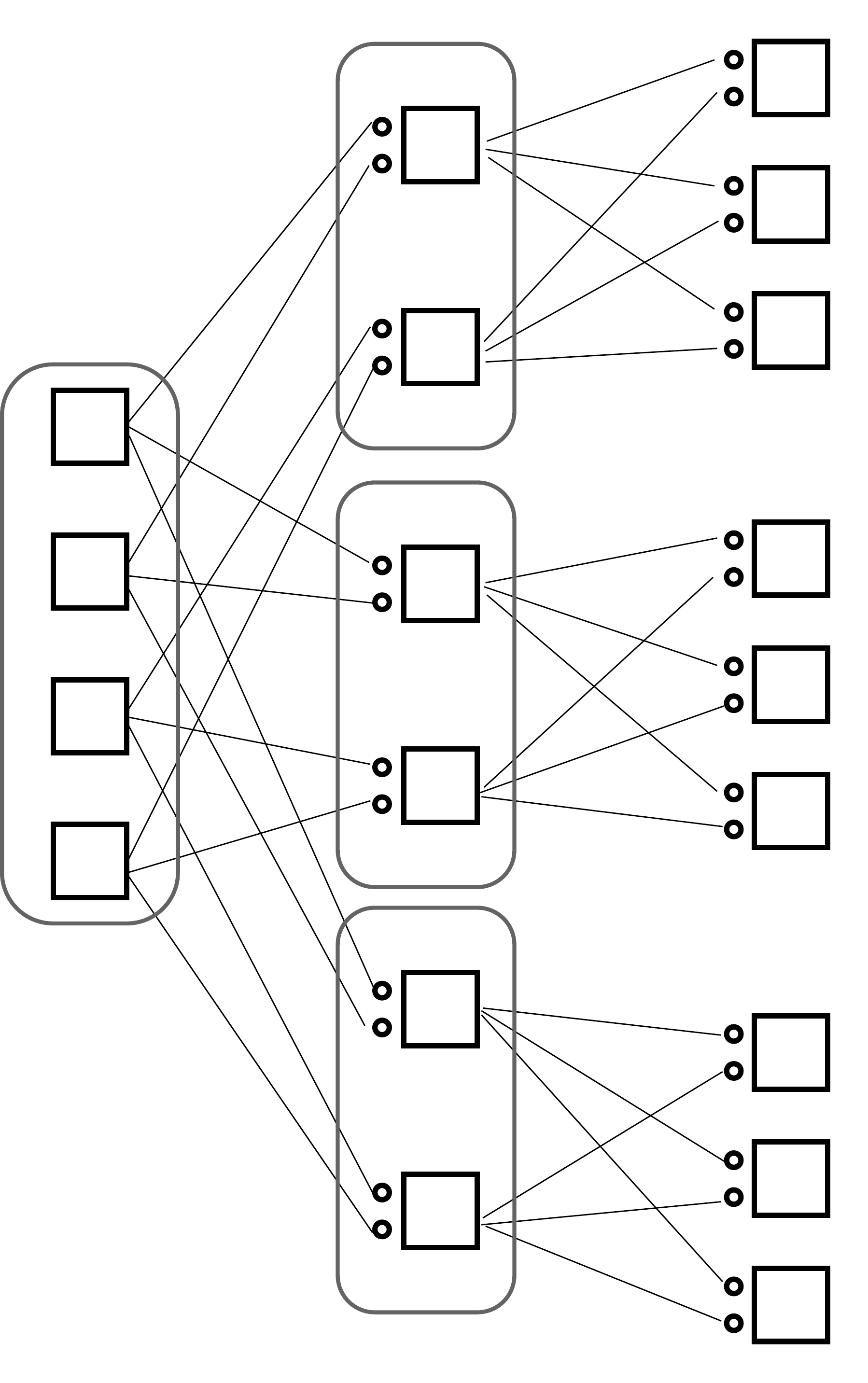}
\end{center}
\caption{Connecting gadgets by edges between $Z$- and $T$-sub-blocks.}\label{zex}
\end{figure}

Using $p$ groups at a time, we connect the $\ell$th $z$-node to all nodes of the $\ell$th top $T$-sub-block in the neighboring layer. In Figure~\ref{zex} we see that for the top $p$ blocks of the first layer, always the first node of each $Z$-block is connected to the top $T$-sub-block of the first block. The second node of each $Z$-block is connected to the top $T$-sub-block of the second block. In the first group of the middle layer, there are two $Z$-sub-blocks with two nodes each. These four nodes are connected to the four blocks of the third layer. In the same way, the four nodes of the second and the four nodes of the third group are connected to the four blocks of the third layer.

\begin{example}[continued]
We show the complete MBA instance resulting from the example 3DM instance in Figure~\ref{excomplete}. As there are two layers, $Y$- and $T$-sub-blocks on the left of the right layer are connected to dummy nodes on the right of the left layer. Bottom $T$-sub-blocks on the right of the left layer are connected to dummy nodes on the left of the right layer. $Z$-sub-blocks of the right layer are connected to $T$-sub-blocks on the right of the left layer as indicated.
\begin{figure}[htbp]
\begin{center}
\includegraphics[width=0.5\textwidth]{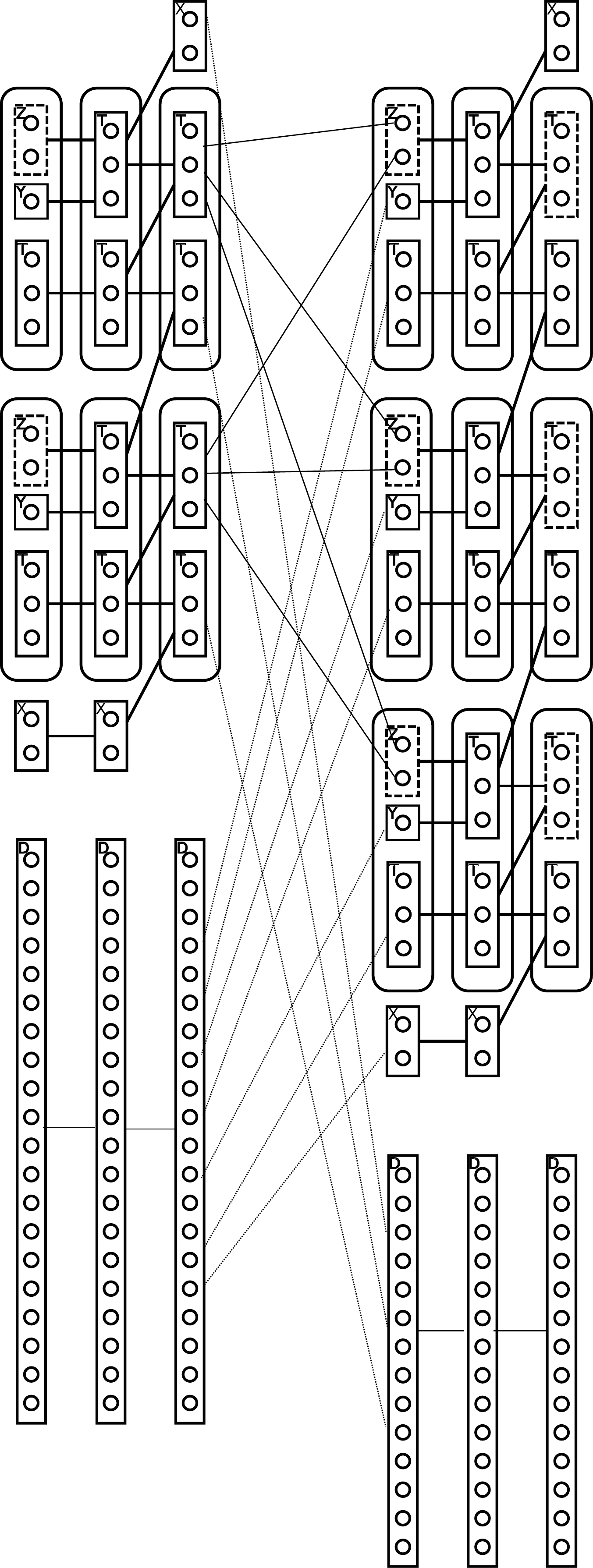}
\end{center}
\caption{Complete MBA instance.}\label{excomplete}
\end{figure}
\end{example}

\subsection{Instance analysis}

 We first note that the size of instance grows with $m$, but is bounded by a polynomial when $m$ is fixed.

\begin{lemma}\label{lemma1}
The number of nodes in the constructed instance is polynomial in $p$, $q$ for each fixed $m$. 
\end{lemma}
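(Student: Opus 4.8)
The goal is to show the total node count is polynomially bounded in $p$ and $q$ once $m=3u$ is fixed. The plan is to account for the three node categories separately — gadget nodes, non-gadget ($X$-) nodes, and dummy nodes — layer by layer, and then sum. Since the number of layers $u=m/3$ is a constant, it suffices to bound the number of nodes contributed by a single layer and multiply by $u$.

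\medskip

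First I would count the gadget nodes in layer $k$. By the block construction, every block of $G_\#$ consists of three sets each containing $2p$ nodes, hence $6p$ nodes per block. The height of the gadget in layer $k$ is $q^{k-1}p^{u-k}$ blocks, so layer $k$ contributes $6p\cdot q^{k-1}p^{u-k} = 6q^{k-1}p^{u-k+1}$ gadget nodes. Since $1\le k\le u$ and $u$ is fixed, each such term is $O(\max\{p,q\}^{u})$, a polynomial in $p,q$ of constant degree $u$; summing over the $u$ layers changes only the constant. Next, the $X$-sub-blocks: each $S_i$ in each layer contains exactly one $X$-sub-block of cardinality $q$, so across all $3u$ sets this adds $3uq = O(q)$ nodes. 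These two counts together give a polynomial bound $P(p,q)$ on the number of non-dummy nodes.

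\medskip

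It then remains to handle the dummy nodes, which is where the only subtlety lies: by definition the number of dummy nodes in layer $k$ equals the total number of non-dummy nodes in \emph{all other} layers. Since we have just bounded the total number of non-dummy nodes by $P(p,q)$, the dummy nodes in any single layer number at most $P(p,q)$, and summing over the $u$ layers gives at most $u\cdot P(p,q)$ dummy nodes in total. (The even distribution over the three columns $S_{3k},S_{3k-1},S_{3k-2}$ is immaterial for a node count; one may note divisibility is arranged by the construction, or simply round up, absorbing an additive constant.) Adding the non-dummy bound $P(p,q)$ and the dummy bound $uP(p,q)$ yields a bound of $(u+1)P(p,q)$ on the total number of nodes, which for fixed $m$ (hence fixed $u$) is a polynomial in $p$ and $q$, as claimed.

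\medskip

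I do not anticipate a genuine obstacle here — the statement is essentially bookkeeping. The one place to be careful is the self-referential definition of the dummy-node count: one must verify it does not create a circular or unbounded recursion. It does not, because dummy nodes in layer $k$ are defined purely in terms of \emph{non-dummy} nodes in the other layers, so the recursion has depth one and terminates immediately. The other minor point worth a sentence is that the exponents $k-1$ and $u-k$ are both bounded by the constant $u-1$, so every power of $p$ or $q$ appearing has constant degree; this is what keeps the bound polynomial rather than, say, $p^{\Theta(m)}$ with $m$ treated as input.
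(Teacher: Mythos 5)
Your proposal is correct and follows essentially the same argument as the paper: count the gadget nodes per layer as $6p\cdot q^{k-1}p^{u-k}$, add the $X$-sub-block nodes of size $q$, bound the dummy nodes in each layer by the non-dummy totals of the other layers, and conclude that everything is polynomial of constant degree since $u=m/3$ is fixed. The paper compresses this into a single displayed inequality, but the bookkeeping is identical.
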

\begin{proof}
The total number of nodes in layer $k$ is 
\begin{equation*}
  3q +  6pq^{k-1}p^{u-k} + \sum_{\bar{k}\in[u];\bar{k}\neq k} (3q + 6pq^{\bar{k}}p^{u-\bar{k}}) < u(3q + 6p^{u+1}),
\end{equation*}
which is polynomial for constant $u=m/3$.
\end{proof}

\begin{lemma}\label{lemmayesno}
Consider only the following construction: The first column contains a $Z$- and $Y$-sub-block. These are connected to a $T$-sub-block in the second column. This $T$-sub-block is connected to an $X$-sub-block in the third column, and also to a dummy sub-block with the same size as the $Y$-sub-block. Then it holds that the corresponding 3DM instance is a YES instance if and only if it is possible to match all nodes of the $Z$-\sbb through the $T$-\sbb with the $X$-sub-block.
\end{lemma}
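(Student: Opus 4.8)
This statement involves no node weights, so the plan is to read it as a purely combinatorial assertion about the layered graph of this three‑column instance and prove it by exhibiting, respectively extracting, matchings. First I would fix notation: column~$1$ consists of the $Z$-\sbb ($q$ nodes, one per $z\in Z$) together with the $Y$-\sbb ($d=p-q$ nodes, $\#occ(y)-1$ copies of each $y\in Y$); column~$2$ is the $T$-\sbb ($p$ nodes, one per $t\in T$); column~$3$ is the $X$-\sbb ($q$ nodes) together with a dummy \sbb ($d$ nodes). Thus every column has $n=p$ nodes. A feasible triple uses one node from each column, a $z$- or $x$-node being joinable only with a $t$-node whose triple contains it, a copy of $y$ being joinable only with a $t$-node whose triple contains $y$, and every dummy node being joinable with every $t$-node. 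The phrase ``matching all nodes of the $Z$-\sbb through the $T$-\sbb with the $X$-\sbb'' I read as: there is a partition of all $3p$ nodes into $p$ feasible triples such that the triple containing any given $z$-node has its third coordinate in the $X$-\sbb.

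For the direction ``3DM is a YES instance $\Rightarrow$ such a partition exists'', I would take a perfect 3-dimensional matching $T'\subseteq T$ with $|T'|=q$. For each $t=(x,y,z)\in T'$ form the triple consisting of the $z$-node, the $t$-node, and the $x$-node; these are feasible, and since $T'$ covers $X$, $Y$, $Z$ exactly once they use all $q$ $z$-nodes, all $q$ $x$-nodes and $q$ distinct $t$-nodes. It then remains to place the $d$ leftover $t$-nodes (those of $T\setminus T'$). On the left, exactly $\#occ(y)-1$ triples of $T\setminus T'$ contain $y$ and the $Y$-\sbb has exactly $\#occ(y)-1$ copies of $y$, so there is a perfect matching between these $t$-nodes and the $y$-copies; on the right, the leftover $t$-nodes are matched bijectively to the $d$ dummy nodes, which is possible since that bipartite graph is complete. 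Together this is a feasible partition with the required property.

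Conversely, suppose a feasible partition $M$ with that property exists. The triples of $M$ containing a $z$-node number exactly $q$ and use $q$ distinct $x$-nodes, hence all of them, as well as $q$ distinct $t$-nodes; writing each such triple as $(z,t,x)$ with $z\in t$ and $x\in t$ defines a subset $T'\subseteq T$ of $q$ triples that, by cardinality, covers every element of $X$ and of $Z$ exactly once. The remaining $d$ $t$-nodes of $M$ can only be paired on the left with $y$-copies, since all $z$-nodes are already used; this gives a perfect matching of $\{t : t\in T\setminus T'\}$ into the $y$-copies, so for each $y$ at most $\#occ(y)-1$ triples of $T\setminus T'$ contain $y$, i.e.\ $T'$ contains at least one triple through each $y$. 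Since $|T'|=q$, $|Y|=q$ and each triple of $T'$ contains exactly one element of $Y$, this forces $T'$ to cover each $y$ exactly once, so $T'$ is a perfect 3-dimensional matching.

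The main obstacle is precisely the $Y$-coverage step in the converse: one must use the deliberate scarcity of $y$-copies ($\#occ(y)-1$ rather than $\#occ(y)$) to conclude that the $q$ triples read off from the $Z$-to-$X$ part already hit every element of $Y$, after which a short counting argument upgrades ``at least once'' to ``exactly once''. The forward direction and the handling of the dummy \sbb are routine once one checks that the exhibited triples are feasible and that the relevant bipartite graphs (disjoint unions of balanced complete bipartite graphs) admit perfect matchings.
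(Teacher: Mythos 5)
Your proof is correct and follows essentially the same route as the paper: build the MBA triples from a perfect matching $T'$ (matching $X$--$Z$ pairs through their $T$-nodes and absorbing leftovers via the $Y$-copies and dummies), and conversely read off the traversed $T$-nodes as the 3DM solution. Your treatment of the converse is in fact more careful than the paper's one-line argument, since you make explicit the counting step (only $\#occ(y)-1$ copies of each $y$ are available, forcing the extracted $q$ triples to cover every $y$ exactly once) that the paper leaves implicit.
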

\begin{proof}
Let 3DM be a YES instance, and let $T'\subseteq T$ be the choice of tupels in an optimal solution. We can build a feasible solution to the corresponding MBA problem by matching the pairs of $X$ and $Z$ elements contained in $T'$ through the respective $T$-sub-block nodes. Remaining nodes in the $Y$-sub-block are matched with the remaining nodes in the $T$-\sbb and the dummy-sub-block.

On the other hand, if the MBA instance allows a feasible matching of all nodes of the $Z$- and $X$-sub-block, then one can create a feasible solution to the 3DM instance by only choosing the corresping nodes of the $T$-\sbb that are traversed.
\end{proof}

\begin{example}[continued]
Figure~\ref{plot7} shows the construction described in Lemma~\ref{lemmayesno} for our example 3DM instance. 

\begin{figure}[htbp]
\begin{center}
\includegraphics[scale=0.5]{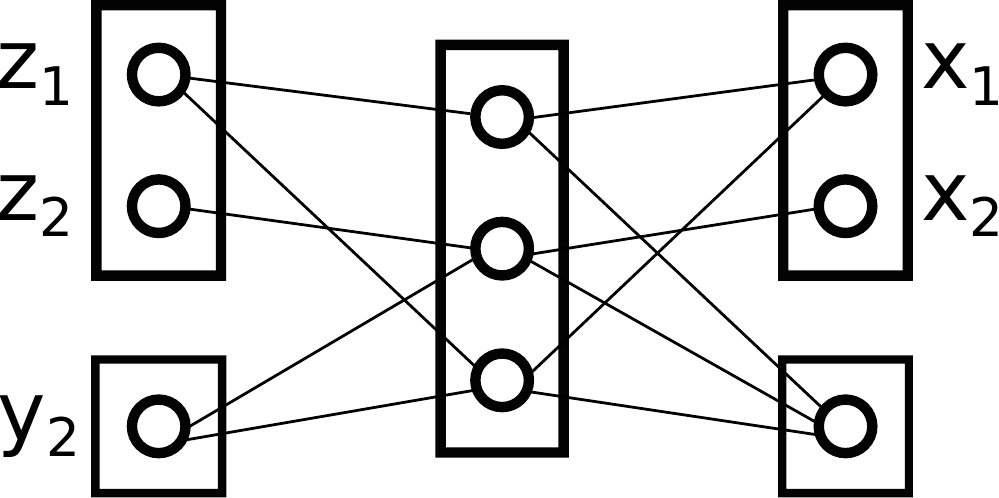}
\end{center}
\caption{Construction in Lemma~\ref{lemmayesno}.}\label{plot7}
\end{figure}

Note that the 3DM instance is indeed a YES instance, and it is possible to match the $Z$- and $X$-\sbb using $t_1$ and $t_2$ in this example.
\end{example}

\begin{lemma} \label{identical_matching_lemma}
In every layer the following is true for any feasible solution of the MBA instance: the matching between head $T$-sub-blocks in $V_2^{\#,j}$ and $V_3^{\#,j}$; and tail $T$-sub-blocks in $V_2^{\#,j}$ and $V_3^{\#,j}$ is exactly identical in every block $j$ of $G_{\#}$, where $\#$ is 0 in the first layer, and 1 in every other layer.
\end{lemma}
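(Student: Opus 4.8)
The plan is to argue by a counting/parity argument that exploits the rigidity of the within-block edge structure described in the "Blocks and connections within blocks" paragraph, combined with the weights placed on the $Z$-sub-blocks and (for $G_0$) on the head $T$-sub-block of $V_3$. First I would fix a feasible solution $M$ of the whole MBA instance and restrict attention to a single layer $k$. Recall that in each block $j$ the head $T$-sub-block of $V_2^{\#,j}$ is connected only to the $Z$- and $Y$-sub-blocks of $V_1^{\#,j}$ and to the head $T$-sub-block of $V_3^{\#,j}$ (each node to the node corresponding to the same tuple), while the tail $T$-sub-block of $V_2^{\#,j}$ is connected to the tail $T$-sub-block of $V_1^{\#,j}$ and to both head and tail $T$-sub-blocks of $V_3^{\#,j}$; and that blocks are chained only through $V_3^{\#,j}\to V_2^{\#,j+1}$. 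I would first observe that, because the head $T$-sub-block of $V_2^{\#,j}$ has no other outgoing edges into $S_{3k}$ except to the head $T$-sub-block of $V_3^{\#,j}$, every tuple $M_r$ passing through a node of the head $T$-sub-block of $V_2^{\#,j}$ must continue into the head $T$-sub-block of $V_3^{\#,j}$ at the node for the \emph{same} tuple — i.e. the head-to-head matching inside block $j$ is forced to be the identity permutation on tuples. That already pins down the head $T$-sub-blocks.

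Next I would handle the tail $T$-sub-blocks. Here the tail $T$-sub-block of $V_2^{\#,j}$ can reach \emph{either} the head or the tail $T$-sub-block of $V_3^{\#,j}$, so it is not immediately forced. The key step is to use a global counting argument along the chain of blocks together with the head-to-head identity just established: the head $T$-sub-block of $V_3^{\#,j}$ receives exactly $p$ tuples from the head $T$-sub-block of $V_2^{\#,j}$ (by the identity matching), so it is already saturated; hence \emph{no} node of the tail $T$-sub-block of $V_2^{\#,j}$ can be matched into the head $T$-sub-block of $V_3^{\#,j}$, forcing the tail-to-tail matching to be a permutation of the tail $T$-sub-block of $V_3^{\#,j}$. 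To show this permutation is the identity on tuples in every block, I would invoke Lemma~\ref{lemmayesno}-style reasoning (or re-derive it): the tail $T$-sub-block of $V_1^{\#,j}$ and the chaining edges $V_3^{\#,j-1}\to V_2^{\#,j}$ propagate the tuple labels, and since the incidence pattern between consecutive tail $T$-sub-blocks is "same tuple to same tuple", any feasible flow through them must respect tuple identity; then one pushes this equality from block $1$ up to block $height$ by induction on $j$. Throughout, the role of the weight-$1$ nodes (the $Z$-sub-blocks in $V_1^{\#,j}$ for both gadgets, and additionally the head $T$-sub-block of $V_3^{0,j}$) is to rule out the "cheap but infeasible-for-3DM" alternatives; but for \emph{this} lemma, which only asserts that the head/head and tail/tail matchings agree across blocks, I expect the structural incidence argument to do the work and the weights to enter only indirectly (via feasibility of $M$ as a partition into $n$ tuples).

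The main obstacle I anticipate is cleanly ruling out the tail $T$-sub-block of $V_2^{\#,j}$ from "mixing" into the head $T$-sub-block of $V_3^{\#,j}$ without circular reasoning: the saturation argument needs the head-to-head identity, which is clean, but one must also be careful that a node of the head $T$-sub-block of $V_2^{\#,j}$ is \emph{actually} used by the solution (it has to be, since $M$ partitions \emph{all} of $S$, so every node lies on exactly one tuple) and that its tuple continues forward rather than having arrived from $V_1^{\#,j}$ only — so I would track both the incoming side (from $V_1^{\#,j}$, i.e. the $Z$/$Y$-sub-blocks and the tail $T$-sub-block of $V_1^{\#,j}$) and the outgoing side (into $V_3^{\#,j}$) of $V_2^{\#,j}$ simultaneously, using that $|V_2^{\#,j}| = 2p$ equals the combined size of its two $T$-sub-blocks. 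A secondary technical point is the base case of the induction on $j$ for the chain of blocks, where the leftmost block $V_2^{\#,1}$ is fed by the $X$-sub-block of $S_{3k}$ (per the "Non-gadget nodes" paragraph) rather than by a previous block; I would check that this boundary connection, being element-wise/tuple-respecting as well, does not break the propagation. Once these are in place, a straightforward induction over $j=1,\dots,height$ yields that the head/head and tail/tail matchings are the identity permutation on tuple labels in every block, which in particular makes them identical across all blocks $j$, as claimed.
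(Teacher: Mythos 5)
There is a genuine flaw, and it sits at the very first step. You claim that ``the head $T$-sub-block of $V_2^{\#,j}$ has no other outgoing edges into $S_{3k}$ except to the head $T$-sub-block of $V_3^{\#,j}$,'' and from this you force the head-to-head matching to be the identity in every block. This is not true in the construction: for $j\ge 2$ each head $T$-node of $V_2^{\#,j}$ is also connected (node-wise, same tuple) to the tail $T$-sub-block of $V_3^{\#,j-1}$ via the between-block connection, and for $j=1$ it is connected to the $X$-sub-block sitting in $S_{3k}$. So the head-to-head matching is \emph{not} forced, and indeed it must not be: the whole reduction relies on the possibility of matching head $T$-nodes of $V_2^{\#,1}$ into the $X$-sub-block, which is how a selected 3DM triple routes a $Z$-node to an $X$-node (Lemma~\ref{lemmayesno} and the YES case of Proposition~\ref{main_results}). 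Your conclusion that the head/head and tail/tail matchings are the identity permutation in every block is therefore strictly stronger than the lemma and actually false; if it held, the set $\Gamma$ in Proposition~\ref{key_lemma} would always be the entire head sub-block and the gap argument would not function. The subsequent saturation step (``the head $T$-sub-block of $V_3^{\#,j}$ is already saturated, hence the tail of $V_2^{\#,j}$ must go tail-to-tail'') inherits this false premise and collapses with it.

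What the lemma asserts — and what the paper proves — is only that whatever choice is made, it is the \emph{same} in every block $j$ of the layer. The paper's argument is a degree/propagation argument: in the bipartite graph between $\bigcup_j V_2^{\#,j}$ and $\bigcup_j V_3^{\#,j}$ the node-wise $T$--$T$ connections give every node degree $2$, so for each tuple $t$ the nodes form an alternating chain head $V_2^{\#,1}(t)$ -- head $V_3^{\#,1}(t)$ -- tail $V_2^{\#,1}(t)$ -- tail $V_3^{\#,1}(t)$ -- head $V_2^{\#,2}(t)$ -- $\dots$, attached to the $X$-sub-block only at the head of $V_2^{\#,1}$. Any feasible solution induces a perfect matching between the two columns, and along such a degree-$2$ chain there are exactly two alternatives (head-to-head/tail-to-tail in every block, or the shifted pattern through the $X$-sub-blocks); whichever is taken at block $1$ propagates unchanged to all blocks. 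Your proposal does gesture at this kind of propagation in its second half, but because it is anchored to the incorrect ``identity is forced'' claim rather than to the two-alternative chain structure, it does not constitute a proof of the statement as intended.
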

\begin{proof}
First we observe that the degree of each node in $V_2^{\#,j}$ and $V_3^{\#,j}$ in the bipartite graph with node-sets $\bigcup_j V_2^{\#,j}$ and $\bigcup_j V_3^{\#,j}$ is equal to 2. Since the $X$-\sbb in $S_{3k}$ is only connected to the head $T$-\sbb of $V_2^{\#,1}$ and two $T$-sub-blocks are connected node-wise, this implies that 
\begin{itemize}
\item the head $T$-\sbb in $V_3^{\#,j}$ is connected to the tail $T$-\sbb of $V_2^{\#,j}$ in the same way in every $j$,
\item the tail $T$-\sbb in $V_3^{\#,j}$ is connected to the head $T$-\sbb of $V_2^{\#,j+1}$ in the same way in every $j< height(G_{\#})$.
\end{itemize} 
\end{proof}

\begin{prop}\label{key_lemma}
If 3DM is a NO instance then in any solution of the resulting MBA instance the following is true: in every layer, there is an $\ell\in [q]$ such that the $\ell^{th}$ node in the $Z$-\sbb in $V_1^{\#,j}$ in  $G_{\#}$ is matched with the head $T$-\sbb in $V_3^{\#,j}$ in  $G_{\#}$, for all $j=1, \ldots, height(G_{\#})$. 
\end{prop}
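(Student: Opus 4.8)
The plan is to fix an arbitrary layer $k$ and an arbitrary feasible solution, and to extract the index $\ell$ from a partition of the triple set $T$ that this solution induces on the gadget $G_{\#}$ of that layer. I would begin with Lemma~\ref{identical_matching_lemma}: since the matching between the head (resp.\ tail) $T$-sub-blocks of $V_2^{\#,j}$ and $V_3^{\#,j}$ is the same for every $j$, there is a block-independent set $B\subseteq T$ of \emph{recurrent} triples $t$ for which, in each block $j$, the node of the head $T$-\sbb of $V_2^{\#,j}$ associated with $t$ is matched with the node of the head $T$-\sbb of $V_3^{\#,j}$ associated with $t$, and the corresponding tail-node with the tail-node; whereas for every \emph{transient} triple $t\in T\setminus B$ that same head-node is matched with the node of the tail $T$-\sbb of $V_3^{\#,j-1}$ associated with $t$ if $j>1$, and with a node of the $X$-\sbb if $j=1$. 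Scanning the edge set, this forces in addition, in every block $j$, that each node of the head $T$-\sbb of $V_2^{\#,j}$ (associated with some $t$) is matched on its left with either the $Z$-node of the $z$-element of $t$ or a $Y$-node of the $y$-element of $t$, and that each node of the tail $T$-\sbb of $V_2^{\#,j}$ is matched with the node of the $T$-\sbb of $V_1^{\#,j}$ associated with the same triple.

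I would then run a counting argument inside a single block $j$, in the spirit of Lemma~\ref{lemmayesno}. The $p$ nodes of the head $T$-\sbb of $V_2^{\#,j}$ are, by the above, perfectly matched with the $q+d=p$ nodes of the $Z$- and $Y$-sub-blocks of $V_1^{\#,j}$; hence exactly $q$ triples receive a $Z$-node, forming a set $T'_j$ that covers every $z$-element exactly once, while the remaining $d$ triples receive $Y$-nodes and therefore cover every $y_\ell$ exactly $\#occ(y_\ell)-1$ times (the $Y$-\sbb containing exactly $\#occ(y_\ell)-1$ copies of $y_\ell$), so that $T'_j$ covers every $y$-element exactly once. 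Moreover, for a transient triple $t$ the node of the head $T$-\sbb of $V_2^{\#,1}$ associated with $t$ can be matched, on its right, only with the $X$-\sbb node of the $x$-element of $t$; hence distinct transient triples have distinct $x$-elements and $|T\setminus B|\le q$. Finally, the definition of $B$ shows that a $Z$-node of block $j$ is matched with the head $T$-\sbb of $V_3^{\#,j}$ if and only if the triple of $T'_j$ carrying its $z$-element lies in $B$.

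To conclude, suppose no index $\ell\in[q]$ works for layer $k$. Then for every $z$-element some block carries its $Z$-node through a transient triple, so every $z$-element lies in some transient triple; together with $|T\setminus B|\le q$ this forces $|T\setminus B|=q$, with $T\setminus B$ covering $X$ and $Z$ each exactly once. It then remains to show that $T\setminus B$ also covers $Y$ exactly once — using that each $T'_j$ covers $Y$ once together with the block-independence of $B$ — whereupon $T\setminus B$ is a three-dimensional matching, contradicting that the instance is obtained from a NO instance of 3DM. Consequently some $z$-element lies only in recurrent triples, and in every block $j$ its $Z$-node is forced through a triple of $B$, i.e.\ matched with the head $T$-\sbb of $V_3^{\#,j}$; the index of that $Z$-node is the desired $\ell$. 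Carrying this out for each $k$ proves the proposition.

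The step I expect to be the main obstacle is this last one: upgrading the block-wise statement ``$T'_j$ covers $Y$ once'' to the global statement ``$T\setminus B$ covers $Y$ once'', and thereby exhibiting a single $\ell$ that is stuck in \emph{every} block simultaneously. The $X$- and $Z$-covers of $T\setminus B$ are immediate — the former from block~$1$ through the $X$-\sbb and the latter from the contradiction hypothesis — but the $Y$-cover has to be teased out of the $\#occ(y_\ell)-1$ bookkeeping spread over all the blocks, and it is exactly here that the block-independence supplied by Lemma~\ref{identical_matching_lemma} is indispensable: it makes ``recurrent'' versus ``transient'' an absolute property of a triple rather than of a triple-in-a-block, ruling out routings in which the stuck $z$-index drifts from one block to another.
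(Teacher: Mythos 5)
Your setup is essentially the contrapositive of the paper's own argument (block-independent recurrent/transient split of $T$ via Lemma~\ref{identical_matching_lemma}, the block-wise count showing each $T'_j$ covers $Y$ and $Z$ exactly once, and the injection of transient triples into $X$ through block~$1$ are all correct), but the step you yourself single out is a genuine gap, not a piece of bookkeeping, and it cannot be closed from the ingredients you have assembled. From the negated conclusion you only obtain that every $z$-element lies in \emph{some} transient triple, hence $|T\setminus B|=q$ and $T\setminus B$ covers $X$ and $Z$ exactly once; this does not force $T\setminus B$ to cover $Y$ exactly once. Concretely, take $q=3$ and $T=\{a_1,a_2,a_3,g_1,g_2,g_3\}$ with $a_1=(x_1,y_1,z_1)$, $a_2=(x_2,y_1,z_2)$, $a_3=(x_3,y_2,z_3)$, $g_1=(x_1,y_2,z_2)$, $g_2=(x_2,y_3,z_3)$, $g_3=(x_3,y_3,z_1)$: this is a NO instance, yet $A=\{a_1,a_2,a_3\}$ has pairwise distinct $x$- and $z$-elements while covering $y_1$ twice and $y_3$ not at all. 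So ``every $z$ lies in a transient triple, and transient triples have distinct $x$'s'' cannot be upgraded to ``the transient triples form a three-dimensional matching''; any completion of your plan would have to exploit the per-block left matchings themselves, not just these element counts.

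The deeper problem is your closing claim that Lemma~\ref{identical_matching_lemma} rules out the stuck index drifting between blocks. It does not: the lemma fixes only the matchings between the $T$-sub-blocks of $V_2^{\#,j}$ and $V_3^{\#,j}$; the matching between the $Z$-/$Y$-sub-blocks of $V_1^{\#,j}$ and the head $T$-sub-block of $V_2^{\#,j}$ is neither claimed nor forced to be identical across blocks. In the instance above, with the block-independent split $B=\{g_1,g_2,g_3\}$, one block may use the left matching $z_1\to a_1$, $\hat y_1\to a_2$, $z_2\to g_1$, $\hat y_2\to a_3$, $z_3\to g_2$, $\hat y_3\to g_3$ (so $z_1$ goes through a transient triple) while another block uses $z_1\to g_3$, $\hat y_1\to a_1$, $z_2\to a_2$, $z_3\to a_3$, $\hat y_2\to g_1$, $\hat y_3\to g_2$ (so $z_2,z_3$ go through transient triples), where $\hat y_i$ denotes the $Y$-node of $y_i$. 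Thus the negated hypothesis of Proposition~\ref{key_lemma} can be realized block-wise without $T\setminus B$ being a matching, and your strategy of exhibiting $T\setminus B$ as a 3DM solution breaks down exactly where you anticipated. Note this is also the point at which the paper's proof is doing its heavy lifting: it asserts the stronger statement that some $z$ has \emph{all} of its triples recurrent, justified by a one-line appeal to Lemma~\ref{lemmayesno}, whose ``only if'' direction needs a full feasible matching including the $Y$-nodes rather than merely a $Z$-to-$X$ routing. To salvage your argument you would need an additional invariant that excludes such block-varying left matchings; as written, the proposal does not prove the proposition.
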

\begin{proof}
Let $\Gamma$ be the nodes in head $T$-\sbb in $V_3^{\#,j}$ which are matched with the head $T$-\sbb  in $V_2^{\#,j}$. Lemma~\ref{identical_matching_lemma} shows that the sets $\Gamma$ are the same in each $j$. Since it is a NO instance there must exist a $z\in Z$ such that all triple nodes intersecting $z$ are contained in $\Gamma$. If not, this implies that all nodes in the $Z$-\sbb in $V_1^{\#,1}$ can be fully matched with the $X$-sub-block, which is contradiction by Lemma \ref{lemmayesno}. Since the node corresponding to $z$ in the $Z$-\sbb of $V_1^{\#,j}$ is only connected to the head $T$-\sbb in $V_2^{\#,j}$ and only to nodes of $\Gamma$, the statement follows. 
\end{proof}

We denote by $\zeta^{j,\ell}_k$ the set of $z$-nodes in layer $k$ over all groups, which are within block $j\in[q^{k-1}]$ in position $\ell\in[q]$. 
For example in Figure~\ref{zex}, the set $\zeta^{2,1}_2$ denotes those nodes in the middle layer that are in the first position of the second block in each group.

\begin{prop}\label{key_No_prop}
If 3DM is a NO instance then the following is true for the resulting MBA instance: in any partial solution in sets combining layers $1, \ldots, k$, there is an $\ell \in [q]$ and $j\in [q^{k-1}]$, such that the weight of the $3k$-tuples containing nodes of set $\zeta^{j,\ell}_k$ is equal to $k+1$.
\end{prop}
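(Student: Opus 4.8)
The plan is to prove the statement by induction on $k$, establishing the slightly stronger claim that one can pick $j\in[q^{k-1}]$ and $\ell\in[q]$ so that \emph{every} $3k$-tuple of the partial solution meeting $\zeta^{j,\ell}_k$ has weight exactly $k+1$. I would use throughout that (cf.\ Lemma~\ref{identical_matching_lemma}) layer~$1$ carries the $G_0$ gadget, so the head $T$-sub-blocks of its $V_3$-sets have all weights~$1$, while layers $2,\dots,u$ carry $G_1$ gadgets, inside which the only weight-$1$ sub-blocks are the $Z$-sub-blocks; in every layer the $Z$-sub-blocks have weight~$1$. A preliminary observation I would record is that Proposition~\ref{key_lemma} applies verbatim to any partial solution containing all columns of a layer, since by Lemma~\ref{identical_matching_lemma} the internal matching of a layer's gadget is pinned down by that layer's columns alone; so for each layer~$i$ met by the partial solution it yields a ``bad'' position $\ell^\ast_i\in[q]$ such that in \emph{every} block $j$ of layer~$i$ the $z$-node in position $\ell^\ast_i$ is forced along $z\to(\text{head-}T\text{ of }V_2^{\#,j},\ \text{weight }0)\to(\text{head-}T\text{ of }V_3^{\#,j})$. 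For the base case $k=1$ we have $q^{k-1}=1$, so I would take $j=1$ and $\ell=\ell^\ast_1$: a $3$-tuple meeting $\zeta^{1,\ell^\ast_1}_1$ consists of the $z$-node in position $\ell^\ast_1$ of block~$1$ of some group (weight~$1$), a weight-$0$ node of the head $T$-sub-block of $V_2^{0,1}$, and a node of the head $T$-sub-block of $V_3^{0,1}$, which has weight~$1$ since layer~$1$ is $G_0$; hence its weight is exactly $2=k+1$.

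For the inductive step, let $k\ge 2$ and fix a partial solution on layers $1,\dots,k$. First I would restrict it to layers $1,\dots,k-1$ (deleting from each $3k$-tuple its three nodes in layer~$k$), obtaining a feasible partial solution $P$, and apply the induction hypothesis to get $j'\in[q^{k-2}]$ and $\ell'\in[q]$ such that every $3(k-1)$-tuple of $P$ meeting $\zeta^{j',\ell'}_{k-1}$ has weight exactly $k$. Next I would invoke Proposition~\ref{key_lemma} on layer~$k$ to obtain its bad position $\ell^\ast_k$. Now recall the wiring between layers $k-1$ and $k$: the $z$-nodes of layer $k-1$ are joined, $p$ groups at a time, to the head $T$-sub-blocks of the $V_3$-sets of layer $k$ (the ``top'' $T$-sub-blocks), the $s$-th $z$-node of a group of layer $k-1$ (the index $s\in[q^{k-1}]$ encoding a pair (block of layer $k-1$, position within that block)) being joined to all $p$ nodes of the head $T$-sub-block of $V_3^{\#,s}$ in the corresponding group of layer~$k$. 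Letting $j$ be the index in $[q^{k-1}]$ that this encoding assigns to $(j',\ell')$, the pair I would claim for layer~$k$ is $(j,\ell^\ast_k)$.

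To conclude I would compute the weight of any $3k$-tuple meeting $\zeta^{j,\ell^\ast_k}_k$. It contains a $z$-node $z^\ast$ of layer~$k$ in position $\ell^\ast_k$ of block $j$; being in the bad position, $z^\ast$ is forced through a weight-$0$ node of the head $T$-sub-block of $V_2^{\#,j}$ into a node $t^\ast$ of the head $T$-sub-block of $V_3^{\#,j}$, and $t^\ast$ has weight~$0$ because layer~$k$ is a $G_1$ gadget. Across the layer boundary the head $T$-sub-blocks of $V_3$ in layer $k$ are linked only to $z$-nodes of layer $k-1$ (the dummy nodes and the $X$-sub-block attach to the tail $T$-sub-blocks of $V_3$, not the head ones), and the two node sets are equinumerous, so in the partial solution they are matched by a perfect matching; hence $t^\ast$ is matched to a $z$-node $z'$ of layer $k-1$, which by the choice of $j$ lies in $\zeta^{j',\ell'}_{k-1}$. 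Deleting $z^\ast$, the head-$T$ $V_2$-node and $t^\ast$ from our $3k$-tuple leaves a $3(k-1)$-tuple of $P$ meeting $\zeta^{j',\ell'}_{k-1}$, of weight exactly $k$ by the induction hypothesis, so the whole tuple has weight $1+0+0+k=k+1$. The step I expect to be the main obstacle is precisely this index bookkeeping: one must check that the bad $z$-node of layer $k$, which Proposition~\ref{key_lemma} pins to the head $T$-sub-block of some $V_3^{\#,j}$, sits in the block $j$ whose head $T$-sub-block of $V_3$ is the one wired (via the ``$p$ groups at a time'' rule, together with the uniform internal matching of Lemma~\ref{identical_matching_lemma}) to the inductively chosen $\zeta^{j',\ell'}_{k-1}$; once the index bijection is set up carefully everything else is a weight count, and specializing to $k=u$ gives that every $3u$-tuple meeting $\zeta^{j_u,\ell_u}_u$ has weight $u+1=m/3+1$, the gap needed for the reduction.
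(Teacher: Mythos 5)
Your proposal is correct and follows essentially the same route as the paper's proof: induction on $k$, with the base case supplied by the $G_0$ gadget (weight-$1$ $Z$- and head $V_3$ $T$-sub-blocks) via Proposition~\ref{key_lemma}, and the inductive step combining Proposition~\ref{key_lemma} for layer $k$ with the fact that head $V_3$ $T$-sub-blocks are wired only to the $\zeta$-sets of layer $k-1$, one of which lies entirely in weight-$k$ tuples. Your write-up is in fact more explicit than the paper's about the strengthened ``every tuple through $\zeta^{j,\ell}_k$'' form of the claim and the $(j',\ell')\mapsto j$ index bookkeeping, but these are refinements of, not departures from, the paper's argument.
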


\begin{proof}
Following Proposition \ref{key_lemma}, there is an $\ell \in [q]$ such that in every $j\leq height(G_{\#}$), the $\ell^{th}$ node is connected to head $T$-\sbb of $V_3^{\#,j}$. The proof is by induction. Suppose that the statement is true up to layer $k-1$, that is, there exists a block in the lowest section of layer $k-1$ with at least one element in $Z$-\sbb in a $3(k-1)$-tuple with weight $k$. Since head $T$-\sbb of $V_3^{\#,j}$ is only connected to $\zeta$ sets constructed from $Z$-sub-blocks in layer $k-1$, one of which by construction has all nodes in $3(k-1)$ tuples with weight $k$; and weight of each node in $Z$-sub-blocks of $V_1^{\#,j}$ is 1, it follows that at least one node in some block of lowest section of layer $k$ will be in a $3k$-tuple with weight $k+1$.

\smallskip

It remains to show that statement is true in the base case, that is, for layer 2. Since the all nodes in $Z$- and head $T$-sub-blocks in each block of $G_0$ have weight 1, from Proposition~\ref{key_lemma}, the $\ell^{th}$ node in $Z$-\sbb in each block of $G_0$ is in a triple with weight $2$ and now the statement for base case follows by construction of $\zeta$ sets. 
\end{proof}

\begin{prop}\label{main_results}
If the instance of 3DM is a YES instance, then there exists a solution in the corresponding MBA instance with weight equal to $1$.

If the instance of 3DM is a NO instance, then any solution in the corresponding MBA instance has weight equal to $u+1$.
\end{prop}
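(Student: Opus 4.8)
The plan is to prove the two implications separately, and for the NO case to bound the weight of an arbitrary solution from both sides. The lower bound is immediate: apply Proposition~\ref{key_No_prop} with $k=u$. Since $m=3u$, a partial solution over layers $1,\dots,u$ is a full solution and a ``$3u$-tuple'' is a full $m$-tuple, so there is some $m$-tuple of weight exactly $u+1$; hence every solution has weight at least $u+1$. For the matching upper bound I would count weight-one nodes: every node has weight $0$ or $1$, so $w(D)\le|\{s\in D:w(s)=1\}|$ for each $m$-tuple $D$, and by the weight assignment the weight-one nodes are precisely the nodes of the $Z$-sub-blocks, which all lie in the $V_1$-column of each of the $u$ layers, together with the nodes of the head $T$-sub-blocks of $V_3$ in the $G_0$ layer, which all lie in the $V_3$-column of that single layer. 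As $D$ contains exactly one node per column, it picks up at most one weight-one node in each layer other than the $G_0$ layer and at most two in the $G_0$ layer, so $w(D)\le(u-1)+2=u+1$. Together with the lower bound this gives weight exactly $u+1$ for every solution of the NO instance.

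For the YES case it suffices to exhibit one solution of weight $1$ (it cannot have weight $0$, since some $Z$-node of weight $1$ must lie in some tuple). Fix a perfect matching $T'\subseteq T$ of the 3DM instance, so $|T'|=q$. The idea is to route, inside every block $j$ of every gadget, the $q$ nodes of the $Z$-sub-block through the head $T$-sub-block of $V_2$ along the triples of $T'$, and the $d$ nodes of the $Y$-sub-block through the same head $T$-sub-block along the remaining triples, exactly in the spirit of the feasibility direction of Lemma~\ref{lemmayesno}. The key point is that the $Z$-chains can then be diverted away from the (weight-one) head $T$-sub-block of $V_3$ in $G_0$: for $j=1$ they leave the block through the $X$-sub-block in the third column, and for $j\ge2$ through the tail $T$-sub-block of block $j-1$ in $V_3$, both of which have weight $0$; the weight-one head $T$-nodes of $V_3$ in $G_0$ are instead matched, through the head $T$-sub-block of $V_2$, with weight-zero $Y$-nodes. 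The $T$-sub-block of $V_1$ is routed through the tail $T$-sub-block of $V_2$, and these local routings are assembled into global $m$-tuples using the inter-block edges, the inter-layer $Z$-to-$T$ edges organised by the $\zeta$-sets, and Lemma~\ref{identical_matching_lemma} to keep the block-by-block matchings identical. All dummy nodes are matched among themselves along the dummy-to-dummy edges present in every column, yielding only weight-zero tuples. One then verifies that every $m$-tuple contains at most one weight-one node, so the constructed solution has weight exactly $1$.

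The NO case and the local description of the routing are routine; the main obstacle is verifying that the YES-case construction is simultaneously feasible and collision-free across the whole instance. Concretely, the work will be in checking that (i) the ``escapes'' --- the $X$-sub-block for the first block of each gadget, and the weight-zero tail $T$-sub-block of the previous block for later blocks --- are wired so that all $q$ of each block's $Z$-chains can indeed be steered past the weight-one head $T$-sub-blocks of $V_3$ in $G_0$; (ii) the inter-layer $Z$-to-$T$ connections, read through the $\zeta$-sets, let these chains pass into the neighbouring layer without incurring a second weight-one node; and (iii) the dummy-node accounting is exactly right, the number of dummies in a layer being the number of non-dummy nodes in all other layers, so that no $m$-tuple is ever forced to absorb two weight-one nodes. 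Once these consistency checks go through, the weight-$1$ bound for YES instances follows, and with the $u+1$ bound for NO instances the proposition is complete.
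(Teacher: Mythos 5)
Your overall route coincides with the paper's. The NO direction is handled exactly as in the paper (invoke Proposition~\ref{key_No_prop} at $k=u$), and your counting argument that no tuple can exceed weight $u+1$ --- at most one weight-one node per layer coming from a $Z$-sub-block, plus at most one more from the head $T$-sub-blocks of $V_3$ in $G_0$ --- is a legitimate addition that the paper leaves implicit; it is what actually justifies the word ``equal'' in the statement. The YES direction also follows the paper's idea: fix the perfect matching $T'$, use Lemma~\ref{lemmayesno} to send the $Z$-chains through the $T'$-nodes of the head $T$-sub-block of $V_2$ and out through the $X$-sub-block (block $1$) or the weight-zero tail $T$-sub-block of the previous block (blocks $j\ge 2$), so that the weight-one head $T$-nodes of $V_3$ in $G_0$ are reached only by weight-zero chains coming from the $Y$-sub-block or the tail $T$-sub-block of $V_1$.

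However, as written your YES case is a plan rather than a proof: your items (i)--(iii) are precisely where the content lies, and you leave them as ``consistency checks'' to be carried out later. The paper discharges them by an explicit inter-layer assignment: the $X$-sub-block and the tail $T$-sub-block in $S_{3k}$ of layer $k$ (for $k>1$) are matched to the dummy block in $S_{3(k-1)-2}$, and every node of $S_{3k-2}$ except the $Z$-sub-block (for $k<u$) is matched to the dummy block in $S_{3(k+1)}$; one then observes that every gadget sub-block is paired with a corresponding sub-block in the adjacent layer and that the $X$-sub-blocks only ever meet dummy (weight-zero) nodes, so no $m$-tuple collects two ones. Without specifying this gluing --- in particular, how a chain that has already absorbed a weight-one $Z$-node of layer $k$ is continued through all remaining columns using only weight-zero nodes, and why the dummy counts (equal to the number of non-dummy nodes of all other layers) and the available edges make this simultaneously possible for all chains --- your claim that ``every $m$-tuple contains at most one weight-one node'' is asserted rather than established. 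Supply this explicit cross-layer matching (essentially the paper's three-bullet assignment, together with the cardinality check) and your argument is complete.
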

\begin{proof}
\textit{YES case:} To prove the statement we need to show that each 1 is not matched with another 1 in the same $3k$-tuple. By Lemma \ref{lemmayesno}, in every layer the $Z$-\sbb in $V_1^{\#,1}$ is fully matched with the $X$-sub-block. Recall that matching within gadgets is completely defined by the matching between the $Z$-\sbb in $V_1^{\#,1}$ and the $X$-\sbb by construction. Using Lemma \ref{lemmayesno}, we match 
\begin{itemize}
    \item the $X$-\sbb in $S_{3k}$ of layer $k$, for all $k>1$, to the dummy block in $S_{3(k-1)-2}$, 
    \item all nodes in the tail $T$-\sbb in $S_{3k}$ of layer $k$, $k>1$, to the dummy block in $S_{3(k-1)-2}$,
    \item all nodes except the $Z$-sub-block in $S_{3k-2}$, $k<u$, to the dummy block in $S_{3(k+1)}$.
\end{itemize}
To complete the proof it is enough observe that each sub-block in $G_{\#}$ is matched with a matching \sbb in the adjacent layer and the $X$-sub-blocks are matched only with dummy blocks in this matching. 

\bigskip

\textit{NO case:} From Proposition \ref{key_No_prop}, it follows that there exists a $3u$-tuple with weight $u+1$. 
\end{proof}

Using the  above (gap) reduction and Proposition~\ref{main_results}, we can now state the following result

\begin{theorem}
MBA with $m=3u$ cannot be approximated to within a ratio $u+1$ unless P=NP.
\end{theorem}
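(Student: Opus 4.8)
The plan is to derive the theorem directly from the gap reduction summarized in Proposition~\ref{main_results}, which is the heart of the construction. First I would recall that Lemma~\ref{lemma1} guarantees the instance built from a 3DM instance has size polynomial in $p,q$ for each fixed $m=3u$, so the reduction is computable in polynomial time. Then I would invoke Proposition~\ref{main_results}: a YES-instance of 3DM yields an MBA instance with optimum weight $1$, while a NO-instance yields an MBA instance in which \emph{every} feasible solution has weight $u+1$. This is precisely a gap of $1$ versus $u+1$.

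Next I would argue the standard consequence. Suppose for contradiction that there is a polynomial-time $\rho$-approximation algorithm $\mathcal A$ for MBA with $m=3u$ with $\rho < u+1$. Given an arbitrary 3DM instance, run the reduction to obtain an MBA instance, then run $\mathcal A$ on it. If the 3DM instance is a YES-instance, the optimum is $1$, so $\mathcal A$ returns a solution of weight at most $\rho \cdot 1 < u+1$; since all weights are integers, this means $\mathcal A$ returns a solution of weight at most $u$. If the 3DM instance is a NO-instance, then by Proposition~\ref{main_results} every solution, in particular the one returned by $\mathcal A$, has weight exactly $u+1 > u$. Hence the value returned by $\mathcal A$ distinguishes YES- from NO-instances of 3DM in polynomial time, and since 3DM is NP-complete this forces $P=NP$. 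Taking the contrapositive gives the theorem.

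I do not expect any genuine obstacle here, since all the substantive work — the gadget construction, the identical-matching argument (Lemma~\ref{identical_matching_lemma}), the propagation of weight through layers (Propositions~\ref{key_lemma}, \ref{key_No_prop}), and the final YES/NO dichotomy (Proposition~\ref{main_results}) — has already been done. The only points needing a little care are: (i) making explicit that ``ratio $u+1$'' should be read as ``ratio strictly less than $u+1$'', so that the inapproximability statement is the sharp one the gap supports; and (ii) using integrality of the objective to turn the strict inequality $\rho < u+1$ into the clean separation ``$\le u$ vs.\ $=u+1$'', which is what actually lets a possibly real-valued approximation guarantee decide the decision problem. Once these are noted, the proof is a two-line appeal to Lemma~\ref{lemma1} and Proposition~\ref{main_results}.
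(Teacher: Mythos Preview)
Your proposal is correct and follows exactly the paper's approach: the theorem is stated as an immediate consequence of the gap reduction, with Lemma~\ref{lemma1} ensuring polynomial size and Proposition~\ref{main_results} supplying the $1$-vs-$(u+1)$ gap. You have simply spelled out the standard ``approximation algorithm decides 3DM'' argument that the paper leaves implicit, and your remarks about integrality and reading ``ratio $u+1$'' as ``strictly less than $u+1$'' are appropriate clarifications.
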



\begin{remark}
Note that, similar to \cite{DKS2012}, the MBA instance constructed here has weights only from $\{0,1\}$. Clearly, this is the simplest possible weight set which implies the hardness of the problem originates mainly from the edge-set of the underlying graph. This observation was already made in \cite{DKS2012} in the case when $m=3$. In fact it is easy to see that when weights are taken from $\{0,1\}$ then the PTAS for the case when edge sets are complete can be used to construct a PTAS for the case when every second edge set is complete. We conjecture that a similar approach can be used with a general weight set. Since our interest in the paper is in the complexity of the general case we leave out the study of these special cases for future study. 
\end{remark}


\section{Solution methods}
\label{sec:methods}

We now consider practical solution methods for the arbitrary MBA. In the following, we refer to the weights in sets as a weight matrix and denote it as $w_{ij}$ for $i\in[n]$, $j \in[m]$. Let $\delta^+(i,j)$ for $(i,j)\in[n]\times[m-1]$ contain those elements $(k,j+1)$ that can be reached from $(i,j)$.

\subsection{Integer programming formulation}
\label{sec:methods1}

We introduce a binary variable $x_{ijk}=1$ for each $i,k\in[n]$ and $j\in[m]$ that is equal to one if and only if tuple $k$ contains element $i$ in level $j$. Let $D$ be a variable modeling the largest weight over all tuples. Then, the MBA can be written as the following integer program:
\begin{align}
\min\ & D \label{ip1}\\
\text{s.t. } & \sum_{i\in[n]} x_{ijk} = 1 & \forall j\in[m], k\in[n] \label{ip2}\\
& \sum_{k\in[n]} x_{ijk} = 1 & \forall i\in[n], j\in[m] \label{ip3}\\
& \sum_{i\in[n]} \sum_{j\in[m]} w_{ij} x_{ijk} \le D & \forall k\in[n] \label{ip4}\\
& \sum_{(i',j+1) \in \delta^+(i,j)} x_{i',j+1,k} \ge x_{ijk} & \forall i,k\in[n],j\in[m-1] \label{ip5}\\
& x_{i1i} = 1 & \forall i\in[n] \label{ip6}\\
& x_{ijk} \in\{0,1\} & \forall i,k\in[n], j\in[m] \label{ip7}
\end{align}
The Objective~\eqref{ip1} is to minimize the largest weight. Constraints~\eqref{ip2} and \eqref{ip3} ensure that each tuple uses exactly one element from each set, and each element in each set is used exactly once, respectively. By Constraint~\eqref{ip4}, we enforce $D$ to become equal to the largest weight over all tuples in an optimal solution. Constraint~\eqref{ip5} model the connectivity of the graph: If $(i,j)$ is used, then it is only possible to use elements $(i',j+1)$ that are in $\delta^+(i,j)$. Finally, the purpose of Constraint~\eqref{ip6} is to break the symmetry in the solution variables.

\subsection{Greedy method}
\label{sec:methods2}

The greedy method has proved highly effective for the complete MBA, and can be directly extended to the arbitrary case. The idea is to construct a solution layer by layer, trying to balance the weights of the tuples in each iteration.

Apart from transferring this idea to the arbitrary case, we also introduce two additional features: Firstly, we do not construct the solution layer by layer, but include a lookahead to further improve the solution quality. Secondly, we present a post-optimization step.

The greedy with lookahead $L\in[m]$ works as follows. We begin with (partial) tuples $P_i = \{ (i,0) \}$ and weights $W_i = w_{i0}$. Then, we solve an IP for $j=1$ up to $j=m-1$, which aims at extending the current set of tuples by one layer, taking the following $L$ layers into account. The IP can be formulated as follows:
\begin{align*}
\min\ & D \\
\text{s.t. } & \sum_{i\in[n]} x_{i,l+1,k} = 1 & \forall l\in[L+1], k\in[n] \\
& \sum_{k\in[n]} x_{i,l+1,k} = 1 & \forall  l\in[L+1],i\in[n] \\
& W_i + w_{i,j+l+1}x_{i,l+1,k} \le D & \forall i,j\in[n], l\in[L+1] \\
& \sum_{(i',j+l+1) \in \delta^+(i,j+l)} x_{i',l+1,k} \ge x_{i,l,k} & \forall i,k\in[n],l\in[L+1] \\
& x_{i1i} = 1 & \forall i\in[n] 
\end{align*}
The constraints and variables are analogous to model (\ref{ip1}--\ref{ip7}), with the difference that layers up to $j$ are already fixed and only the next $L+1$ layers are taken into account. This means that problems are considerably smaller and easier to solve for $L\ll m$. Note that if $j+L+1>m$, the model needs to be adjusted to include less layers.

Once an optimal solution has been determined in iteration $j$, all tuples and their weights are updated and the model for the next layer is solved. Note that while $L$ layers are taken into account during the optimization, we use a rolling horizon where only the result for the first unfixed layer is used. The remaining part of the optimization problem is only used as a lookahead for better balancing the expected weights of the current tuples. 

Furthermore, for $L=0$ this approach is equivalent to the standard greedy method, extended to the arbitrary MBA. Note that in this case, the problem in each iteration corresponds to a bottleneck assignment problem, which can be solved in polynomial time. For $L\ge1$, the subproblem that needs to be solved in the greedy procedure is $NP$-hard, as it is an MBA in itself.

Once a complete set of tuples has been constructed this way, we perform an additional post-optimization step. To this end, we re-optimize the current solution by keeping all layers except the connection between columns $j$ and $j+1$ fixed, and letting $j$ run from $1$ to $m-1$. If it is indeed possible to improve the current solution, the process is repeated until no more improvements are found.

\subsection{Column generation}
\label{sec:methods3}

We now introduce a heuristic solution method of MBA that is based on column generation. To this end, consider the following extended formulation, where $P$ denotes the set of all tuples and a tuple chooses exactly one node $(i,j)$ per layer $j$.
\begin{align*}
\min\ & D \tag{Master} \\
\text{s.t. } & \sum_{k\in[n]} \sum_{p\in P : (i,j)\in p} x_{kp} \ge 1 & \forall i\in[n],j\in[m] \\
& D \ge \sum_{p\in P} w_p x_{kp} & \forall k\in[n]  \\
& x_{kp}\in\{0,1\}
\end{align*}
As before, $D$ is a variable modeling the worst-case weight over all tuples. Variables $x_{kp}$ are used to model whether the $k$th tuple of the solution contains tuple $p\in P$. Note that this model contains a polynomial number of constraints, but a possible exponential number of variables. To avoid this problem, we start with a subset $P'$ of all possible tuples (e.g., representing a starting solution constructed using the greedy heuristic). We iteratively construct additional tuples until the LP-relaxation of problem~(Master) is solved to optimality. To reach an optimal solution to the integer problem (Master), a branching procedure would need to follow. We restrict our column generation approach by only applying it in the root node, i.e., once the LP-relaxation of (Master) has been solved, we solve a restricted integer (Master) problem where all tuples $P'$ generated so far are taken into account.

To find columns, we solve the dual of the restricted master problem to find columns with negative reduced costs. These subproblems are of the form
\begin{align*}
\max\ & \sum_{i\in[n]} \sum_{j\in[m]} u_{ij} \tag{Sub}\\
\text{s.t. } &\sum_{(i,j)\in p} u_{ij} \le w_p r_k & \forall k\in[n],p\in P' \\
&\sum_{k\in[n]} r_k \le 1 \\
&u_{ij} \ge 0 & \forall i\in[n],j\in[m] \\
&r_k \ge 0 & \forall k\in[n]
\end{align*}
Let $u^*_{ij}$ for all $i\in[n]$, $j\in[m]$ and $r^*_k$ for all $k\in[n]$ be an optimal solution to (Sub), and set $k^* = \arg\min_{k\in[n]} r^*_k$. Our aim is to produce a set of tuples $p$ such that their reduced costs $\sum_{(i,j)\in p} (r^*_{k^*}w_{ij}-u^*_{ij})$ is as small as possible. To ensure that these new columns can be combined with other columns, our aim is to produce sets of $n$ tuples in every pricing iteration. This means we can simply use the greedy method from the previous section with modified weights to solve the pricing problem heuristically.

\section{Computational experiments}
\label{sec:experiments}

\subsection{Setup}

The aim of these experiments is to evaluate the quality of the methods presented in Section~\ref{sec:methods} in comparison to the standard greedy method as a representative for the state of the art.

To this end, we generate random instances in the following way. Given $n$ and $m$, a weight matrix is generated by sampling each $w_{ij}$ for $i\in[n]$, $j\in[m]$ uniformly randomly from $\{1,\ldots,100\}$. To create the graph structure, we first generate all horizontal arcs, i.e., arcs connecting $(i,j)$ with $(i,j+1)$ to ensure that a feasible solution exists. We then create tuples from the first to the last layer by choosing random nodes in each layer. Arcs along these tuples are added to the graph, if they don't already exist. For a density parameter $d\ge 0$, we create $dn$ such tuples.

In all applications of MBA, see Section~\ref{lit_review}, typically it is the case that $n>m$. We follow this convention in our instance generation.  We generate 100 random problems for each configuration of $(n,m)\in \{(10,5), (30,8), (35,9), (40,10), (100,15)\}$ using $d=1.8$ and $d=2.2$ (a total of 1000 instances). The choice of parameters is to reflect small-scale problems ($n=10$), medium-scale problems ($n=30,35,40$) and large-scale problems ($n=100$). 

Each instance is solved using the standard greedy method. The resulting solution is used as a baseline comparison. It is included as a starting solution to the IP model from Section~\ref{sec:methods1}. Additionally, we calculate greedy solutions with lookahead for $L=1,2,3$ and post-optimization as described in Section~\ref{sec:methods2} (denoted as GP1, GP2, and GP3). Finally, we also use the column generation approach from Section~\ref{sec:methods3} using the lookahead and post-optimize greedy for the starting solution and the pricing problem (denoted as CG1, CG2, CG3). For large-scale problems, we only use $L=1,2$.

For each method and instance, we allow a time limit of 5 minutes. This can be exceeded for the column generation approach, as we let the final pricing iteration before hitting the time limit complete, and then still solve the restricted master problem. We solve IPs using Cplex version 12.8 and setting the MIPEmphasis parameter such that it focusses on heuristics (so that Cplex spends less time on proving optimality, and the advantage of heuristic methods that do not need to do so is reduced). All experiments are conducted on a virtual Ubuntu server with ten Xeon CPU 
E7-2850 processors at 2.00 GHz speed and 23.5 GB RAM. All processes are restricted to one thread.

\subsection{Results}

For each instance and method, we calculate the ratio of the objective value achieved by greedy and the objective value achieved by the respective method. This means the higher these numbers, the larger is the improvement over the baseline. Table~\ref{tab:1} shows the average of these ratios over the 100 instances in each configuration, while Figure~\ref{fig:boxplots} presents more detailled boxplots for this data.

\begin{table}[htb]
\begin{center}
\begin{tabular}{rrr|rrrrrrr}
$n$ & $m$ & $d$ & IP & GP1 & CG1 & GP2 & CG2 & GP3 & CG3 \\
\hline
10 & 5 & 1.8 & \textbf{4.55} & 3.26 & 3.26 & 3.38 & 3.47 & 3.99 & 3.99 \\
10 & 5 & 2.2 & \textbf{8.28} & 5.44 & 5.58 & 6.70 & 6.85 & 7.36 & 7.36 \\
\hline
30 & 8 & 1.8 & \textbf{9.78} & 6.47 & 6.62 & 7.26 & 8.14 & 7.41 & 8.04 \\
30 & 8 & 2.2 & 10.97 & 8.82 & 9.50 & 9.43 & 10.23 & 10.84 & \textbf{11.32} \\
\hline
35 & 9 & 1.8 & 8.85 & 7.36 & 7.69 & 8.23 & 9.50 & 9.03 & \textbf{9.81} \\
35 & 9 & 2.2 & 1.51 & 8.91 & 9.51 & 9.66 & 10.40 & 11.28 & \textbf{11.33} \\
\hline
40 & 10 & 1.8 & 1.84 & 7.15 & 7.52 & 8.64 & \textbf{9.90} & 8.79 & 9.55 \\
40 & 10 & 2.2 & 0.00 & 10.53 & 10.73 & 11.46 & 11.53 & \textbf{12.27} & \textbf{12.27} \\
\hline
100 & 15 & 1.8 & 0.00 & 10.01 & 10.01 & \textbf{10.03} & \textbf{10.03} & - & - \\
100 & 15 & 2.2 & 0.00 & \textbf{11.31} & \textbf{11.31} & 11.07 & 11.07 & - & -
\end{tabular}
\caption{Average improvement over greedy method in percent. In bold are best values per row.}\label{tab:1}
\end{center}
\end{table}

\begin{figure}[htbp]
\begin{subfigure}[t]{0.45\textwidth}
\includegraphics[width=\textwidth]{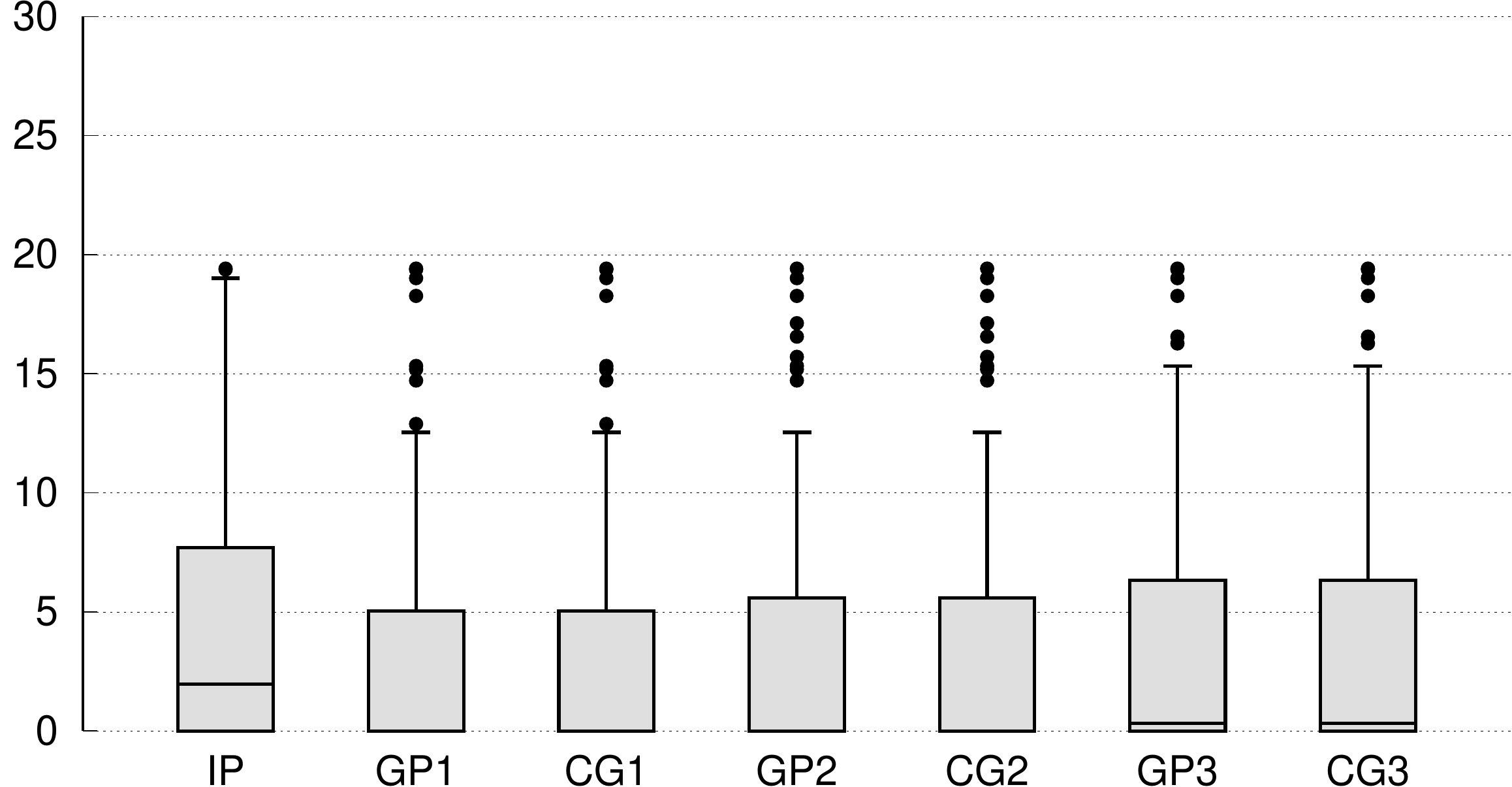}
\caption{$n=10$, $m=5$, $d=1.8$}\label{fig:1}
\end{subfigure}
\hfill
\begin{subfigure}[t]{0.45\textwidth}
\includegraphics[width=\textwidth]{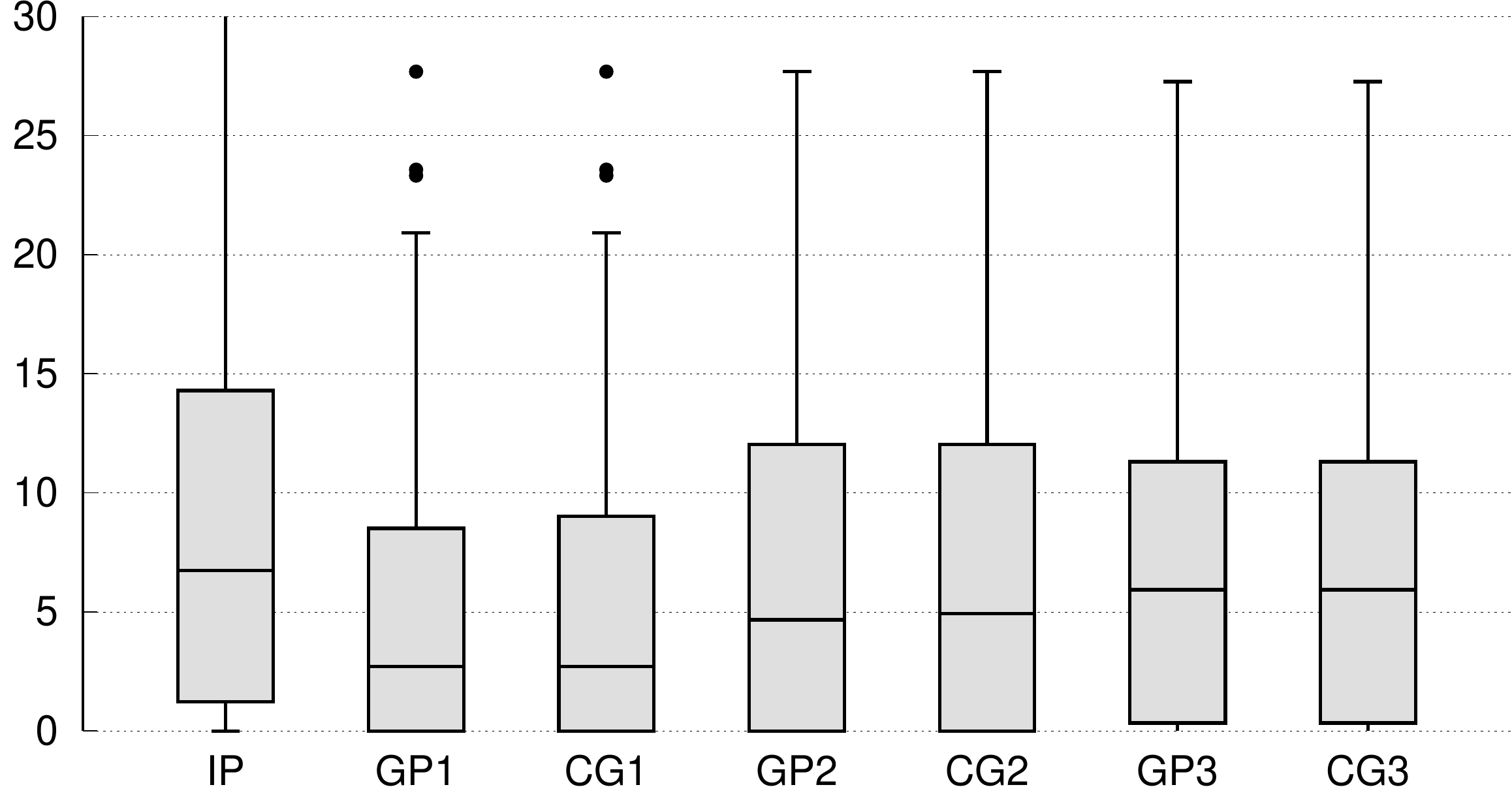}
\caption{$n=10$, $m=5$, $d=2.2$}\label{fig:2}
\end{subfigure}
\begin{subfigure}[t]{0.45\textwidth}
\includegraphics[width=\textwidth]{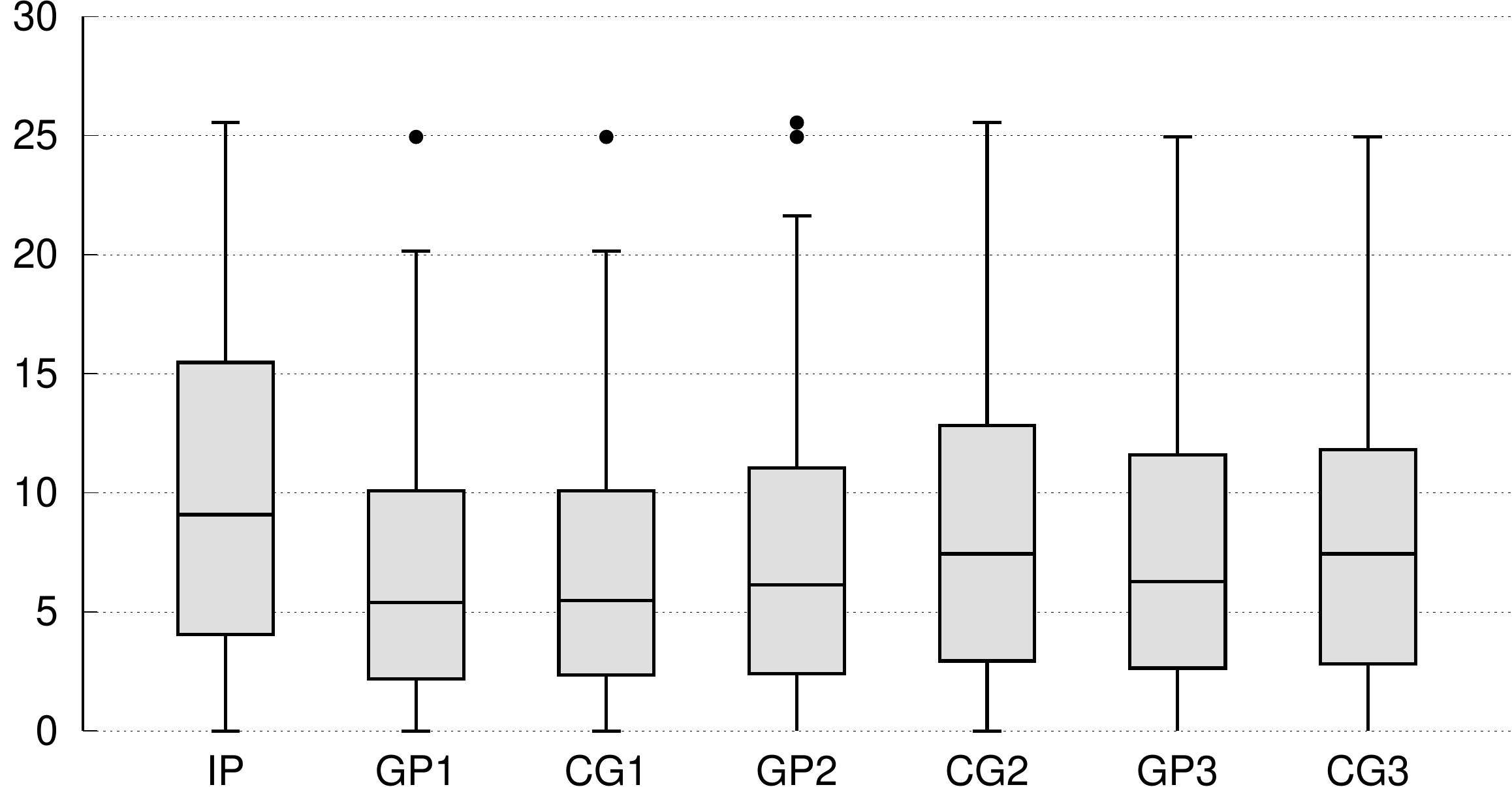}
\caption{$n=30$, $m=8$, $d=1.8$}\label{fig:3}
\end{subfigure}
\hfill
\begin{subfigure}[t]{0.45\textwidth}
\includegraphics[width=\textwidth]{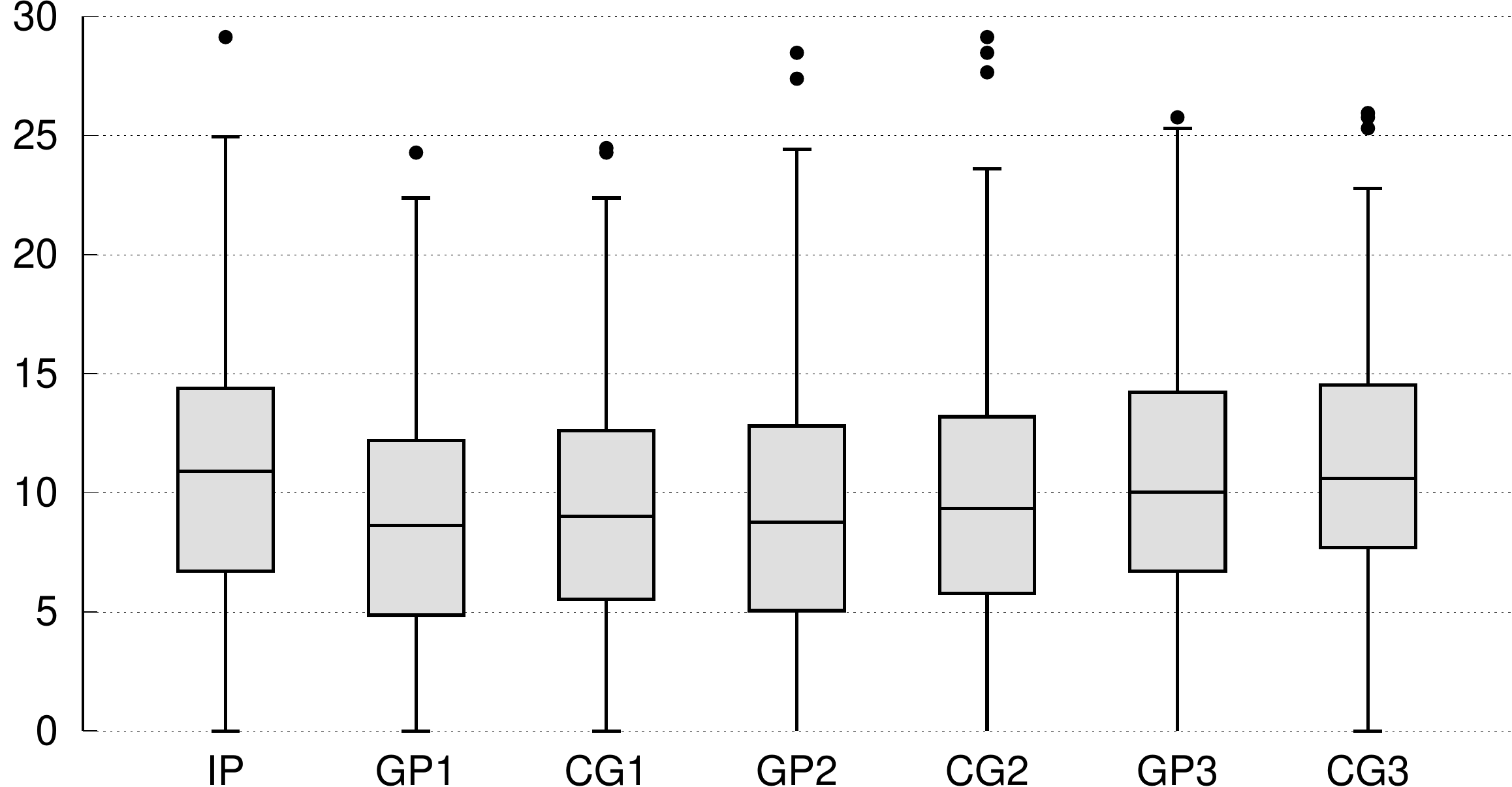}
\caption{$n=30$, $m=8$, $d=2.2$}\label{fig:4}
\end{subfigure}
\begin{subfigure}[t]{0.45\textwidth}
\includegraphics[width=\textwidth]{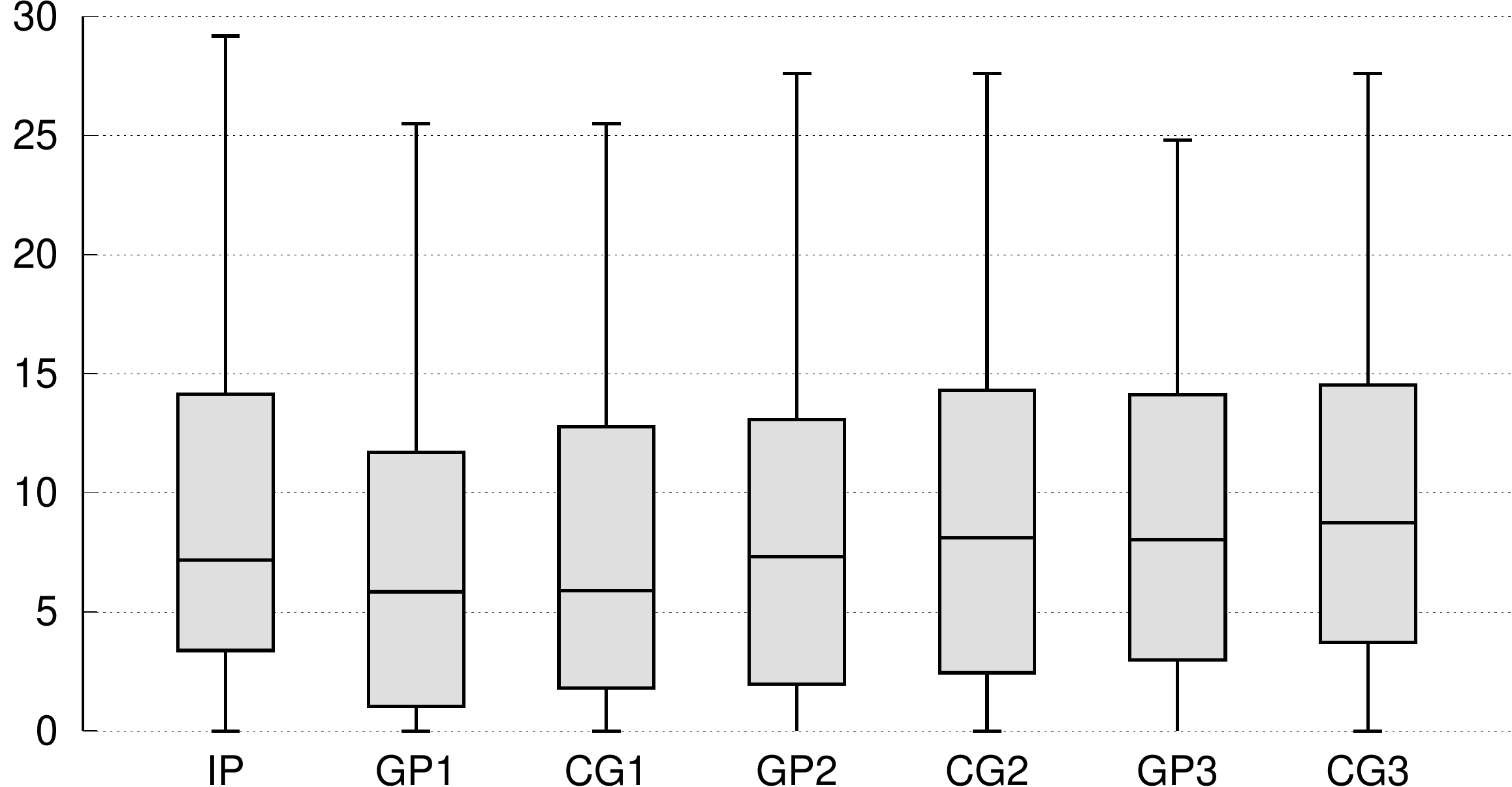}
\caption{$n=35$, $m=9$, $d=1.8$}\label{fig:5}
\end{subfigure}
\hfill
\begin{subfigure}[t]{0.45\textwidth}
\includegraphics[width=\textwidth]{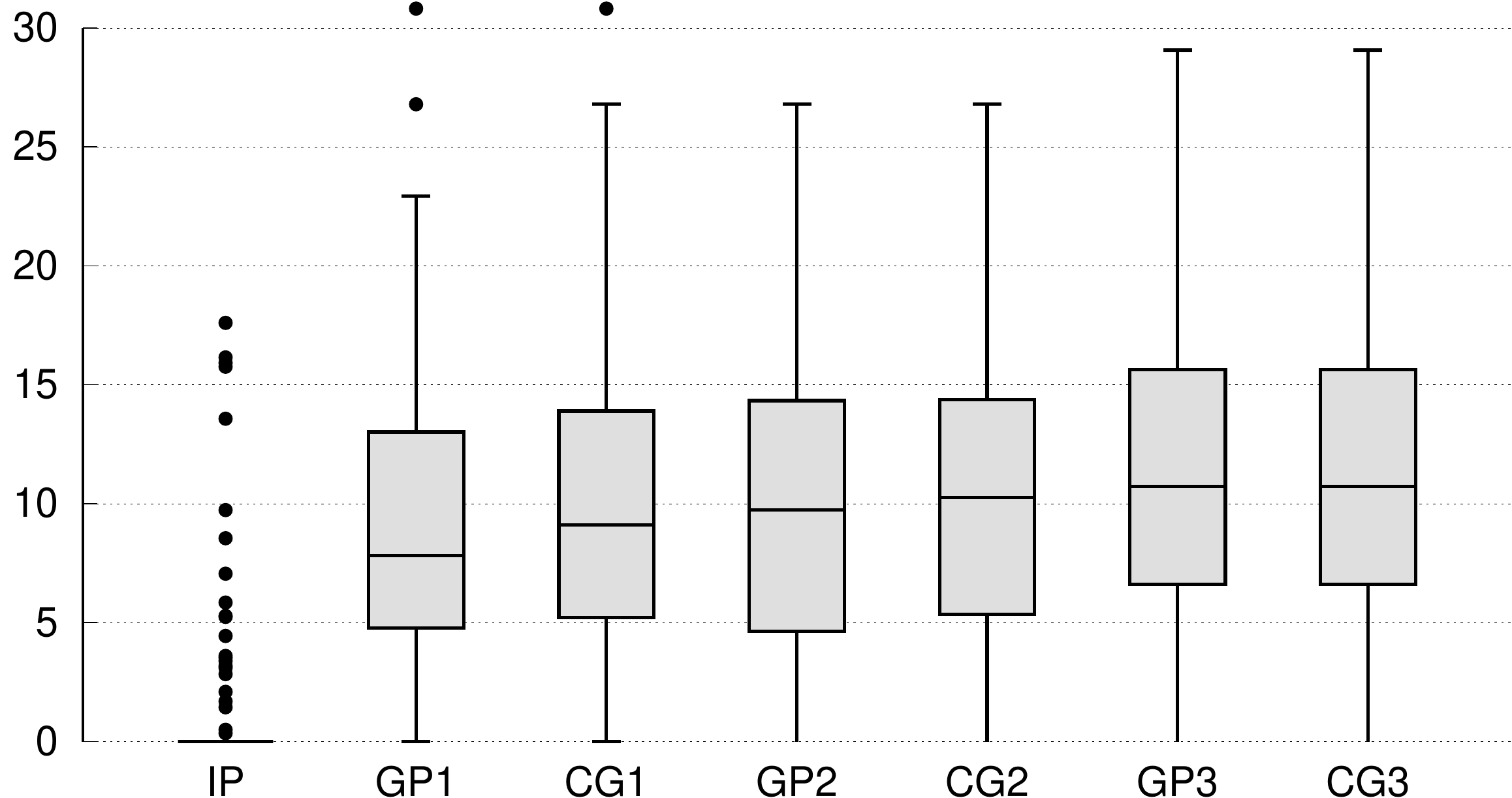}
\caption{$n=35$, $m=9$, $d=2.2$}\label{fig:6}
\end{subfigure}
\begin{subfigure}[t]{0.45\textwidth}
\includegraphics[width=\textwidth]{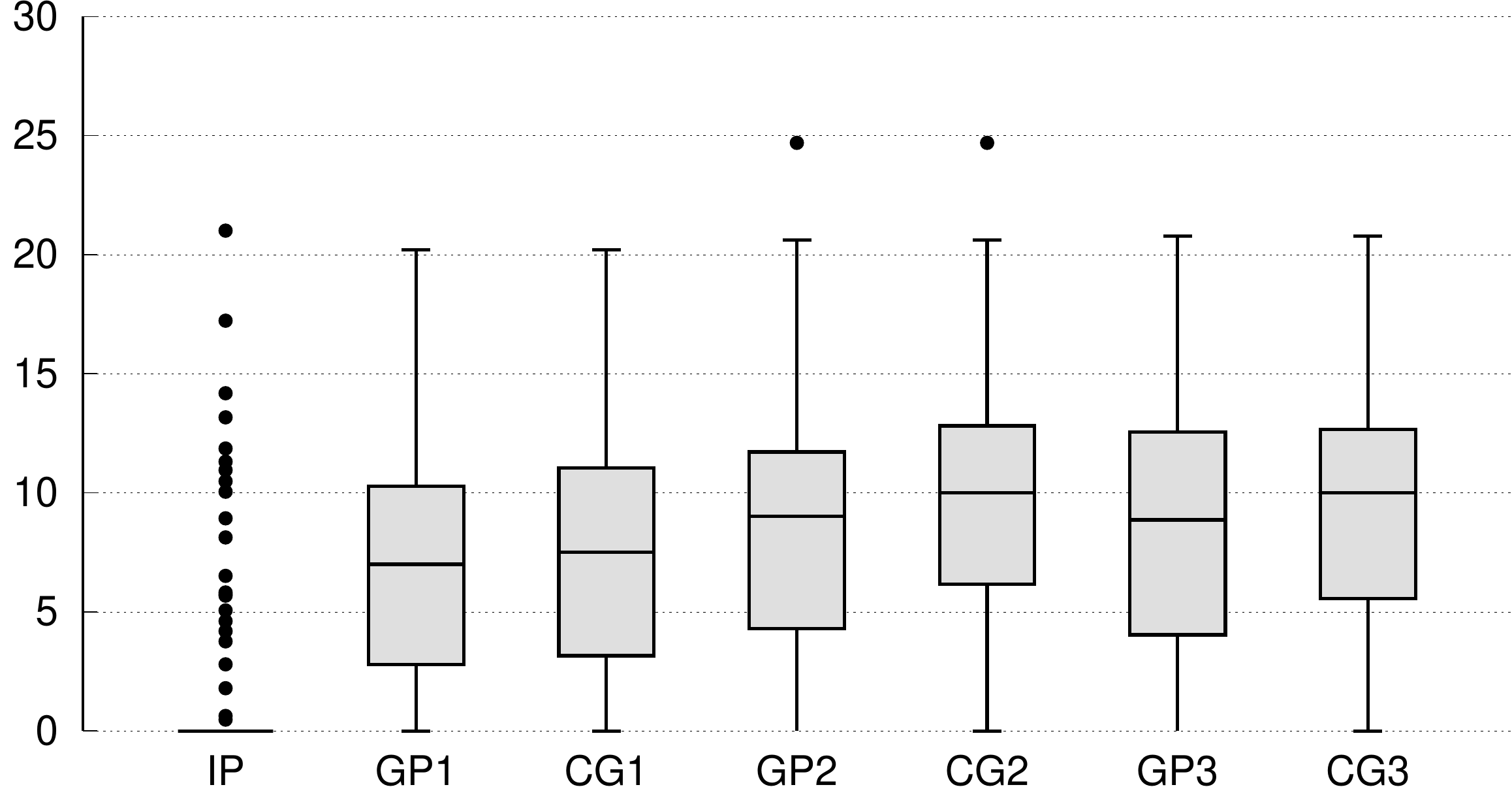}
\caption{$n=40$, $m=10$, $d=1.8$}\label{fig:7}
\end{subfigure}
\hfill
\begin{subfigure}[t]{0.45\textwidth}
\includegraphics[width=\textwidth]{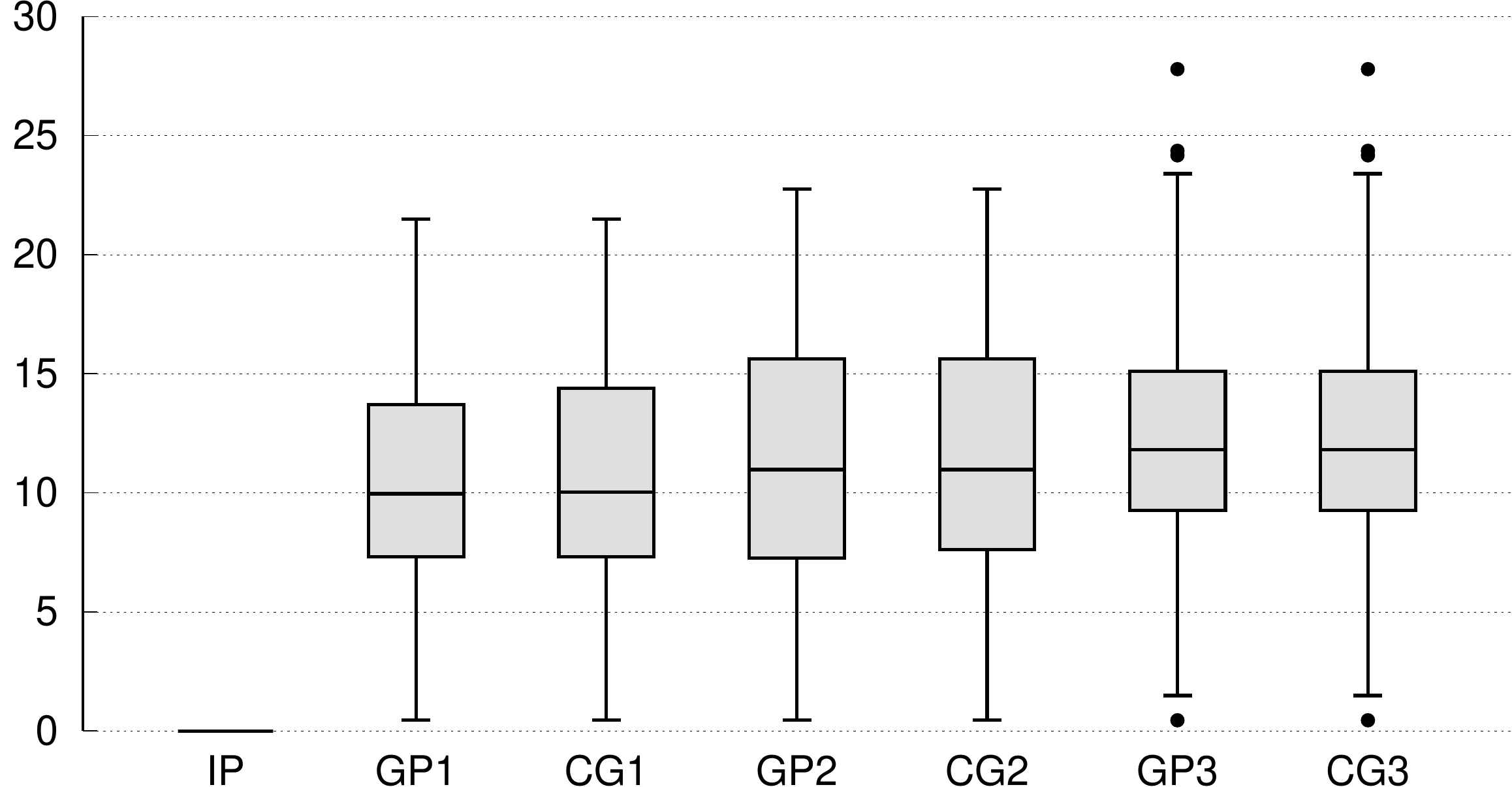}
\caption{$n=40$, $m=10$, $d=2.2$}\label{fig:8}
\end{subfigure}
\begin{subfigure}[t]{0.45\textwidth}
\includegraphics[width=\textwidth]{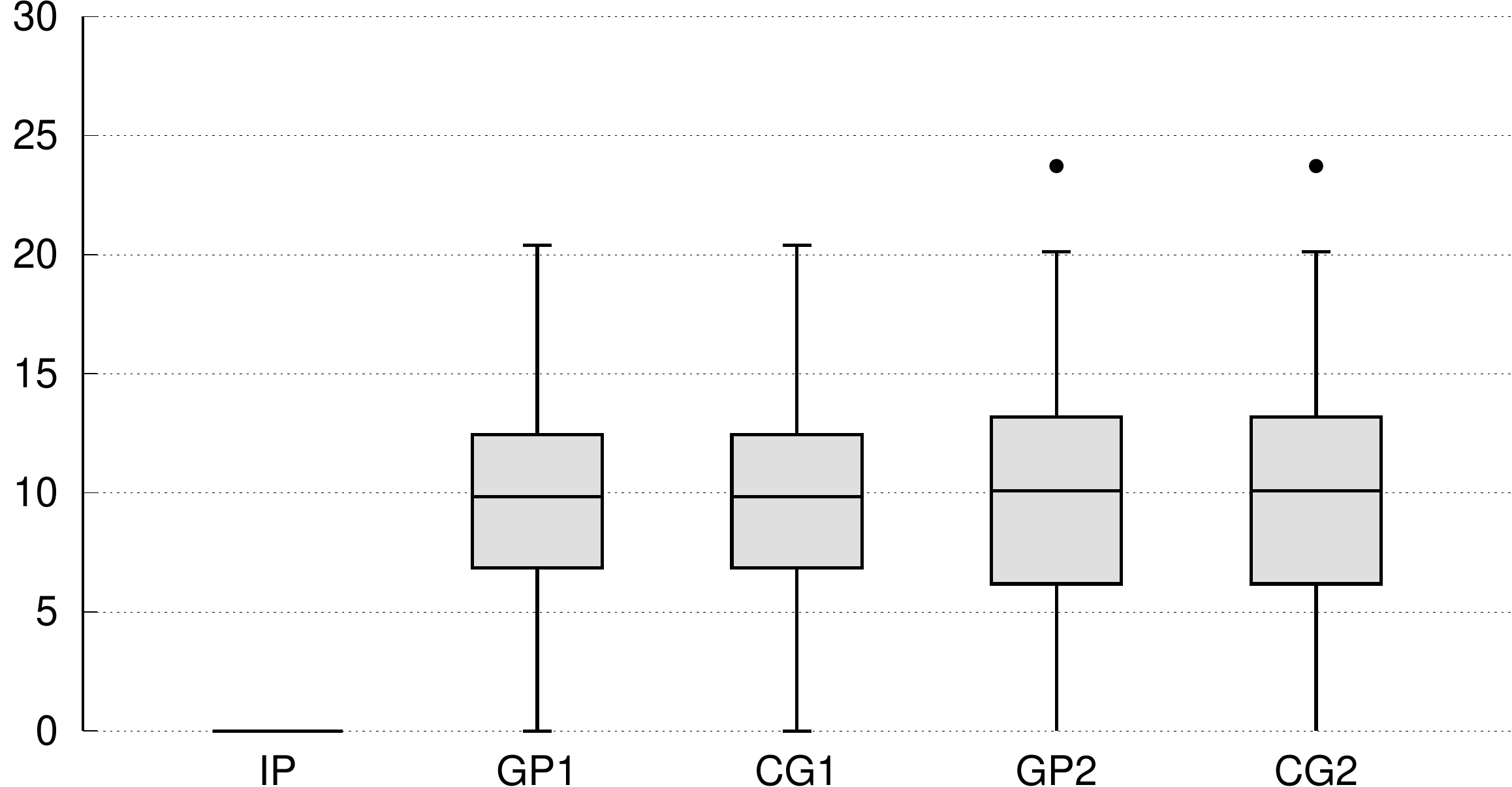}
\caption{$n=100$, $m=15$, $d=1.8$}\label{fig:9}
\end{subfigure}
\hfill
\begin{subfigure}[t]{0.45\textwidth}
\includegraphics[width=\textwidth]{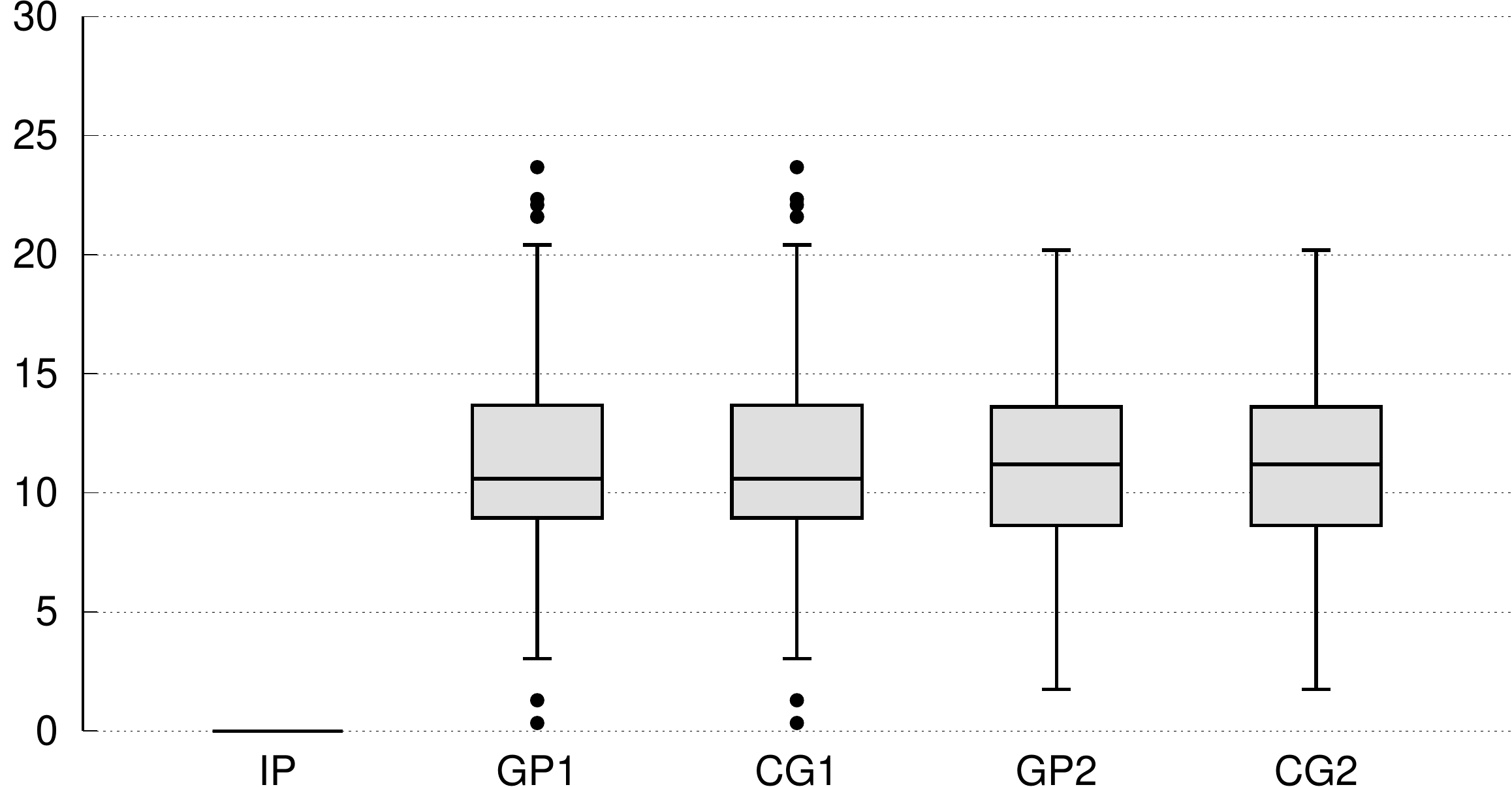}
\caption{$n=100$, $m=15$, $d=2.2$}\label{fig:10}
\end{subfigure}
\caption{Objective value reduction compared to greedy.}\label{fig:boxplots}
\end{figure}

We note that for small-scale problems ($n=10$), the IP approach performs best, i.e., it gives the largest improvement over the baseline. With increasing value of $L$, both GP and CG achieve better results. The advantage of using CG instead of GP is relatively small; only in some cases can the column generation improve the starting solution produced by greedy with lookahead and post-optimization. The improvement over the baseline that can be achieved with higher density ($d=2.2$) in comparison with lower density ($d=1.8$) is larger, i.e., with a larger set of possible solutions, there is also more potential to improve the starting solution.

The performance of methods is very different for medium-sized problems ($n=30,35,40$). Firstly we find that the performance of IP strongly degrades, so far that it is not able to improve the greedy starting solution for $n=40$ and $d=2.2$ at all. While higher density still means that larger improvements can be achieved, it also means that IP models are more complex, and the IP performance is worse for $d=2.2$ than for $1.8$. This is not the case for the other methods. 
Secondly, it is now possible for CG to clearly outperform its GP starting solution. In this region, the column generation approach is able to produce the best solutions found.

For large-scale problems ($n=100$), we find again that IP cannot improve its starting solution, as expected. But in this case, this also applies to CG. Still, the post-optimization makes a significant difference (less so the lookahead), with improvements over 10\% on average in comparison to the baseline.

This comparison of algorithm performance is complemented by Tables~\ref{tab:2} and \ref{tab:3}, which show the average computation times. Table~\ref{tab:2} also indicates the number of instances out of 100 that were solved to proven optimality by the IP approach. Even for large instances, GP1 shows a fast performance with 4.1 seconds on average. The lookahead parameter has a more significant impact on computation times than the post-optimization, with computation times considerably increasing with growing $L$.

\begin{table}[htb]
\begin{center}
\begin{tabular}{rrr|rrrrr}
$n$ & $m$ & $d$ & IP & Greedy & GP1 & GP2 & GP3 \\
\hline
10 & 5 & 1.8 & 0.0 (100) & 0.0 & 0.0 & 0.0 & 0.1\\
10 & 5 & 2.2 & 0.2 (100) & 0.0 & 0.0 & 0.1 & 0.2\\
\hline
30 & 8 & 1.8 & 145.6 (67) & 0.1 & 0.2 & 0.5 & 1.4\\
30 & 8 & 2.2 & 271.6 (15) & 0.1 & 0.3 & 0.8 & 6.2\\
\hline
35 & 9 & 1.8 & 246.9 (25) & 0.2 & 0.3 & 0.8 & 2.4\\
35 & 9 & 2.2 & 296.0 (4) & 0.2 & 0.4 & 1.3 & 12.8\\
\hline
40 & 10 & 1.8 & 289.7 (7) & 0.2 & 0.4 & 1.2 & 4.1\\
40 & 10 & 2.2 & 299.9 (0) & 0.3 & 0.6 & 2.1 & 31.6\\
\hline
100 & 15 & 1.8 & 300.3 (0) & 1.6 & 3.4 & 16.6 & -\\
100 & 15 & 2.2 & 300.3 (0) & 1.8 & 4.1 & 21.8 & -
\end{tabular}
\caption{Average computation time in seconds. For IP, the value in brackets indicates the number of instances solved to provel optimality.}\label{tab:2}
\end{center}
\end{table}

The computation times for CG are separated into the time required for solving the priving problem iteratively (pre), and solving the resulting master problem after columns have been generated (master). Table~\ref{tab:4} indicates the average number of tuples that are produced in the process. This value is connected with the computation time of the resulting master problem. We see that the computational burden of CG is higher than for GP, as can be expected.

\begin{table}[htb]
\begin{center}
\begin{tabular}{rrr|rrr|rrr|rrr}
 &  &  & \multicolumn{3}{c}{CG1} & \multicolumn{3}{c}{CG2}  & \multicolumn{3}{c}{CG3} \\
$n$ & $m$ & $d$ & pre & master & total & pre & master & total & pre & master & total \\
\hline
10 & 5 & 1.8 & 0.1 & 0.0 & 0.1 & 0.1 & 0.0 & 0.1 & 0.2 & 0.0 & 0.2 \\
10 & 5 & 2.2 & 0.2 & 0.0 & 0.2 & 0.4 & 0.0 & 0.4 & 0.6 & 0.0 & 0.6 \\
\hline
30 & 8 & 1.8 & 6.8 & 0.1 & 6.9 & 51.1 & 0.4 & 51.5 & 73.7 & 0.2 & 74.0 \\
30 & 8 & 2.2 & 60.0 & 2.7 & 62.7 & 125.0 & 3.8 & 128.8 & 255.3 & 5.9 & 261.2 \\
\hline
35 & 9 & 1.8 & 16.2 & 0.2 & 16.4 & 134.1 & 1.0 & 135.1 & 184.2 & 0.8 & 185.0 \\
35 & 9 & 2.2 & 149.5 & 10.5 & 160.0 & 253.6 & 44.4 & 298.0 & 341.8 & 12.3 & 354.2 \\
\hline
40 & 10 & 1.8 & 38.1 & 0.4 & 38.4 & 268.3 & 4.4 & 272.7 & 283.9 & 3.5 & 287.4 \\
40 & 10 & 2.2 & 232.5 & 84.0 & 316.5 & 301.4 & 99.6 & 401.0 & 475.8 & 13.9 & 489.7 \\
\hline
100 & 15 & 1.8 & 359.2 & 0.5 & 359.7 & 366.0 & 0.3 & 366.3 & - & - & - \\
100 & 15 & 2.2 & 381.6 & 0.4 & 382.0 & 389.5 & 0.3 & 389.8 & - & - & -
\end{tabular}
\caption{Average computation time in seconds.}\label{tab:3}
\end{center}
\end{table}

\begin{table}[htb]
\begin{center}
\begin{tabular}{rrr|rrr}
$n$ & $m$ & $d$ & CG1 & CG2 & CG3 \\
\hline
10 & 5 & 1.8 & 12.8 & 13.3 & 11.5 \\
10 & 5 & 2.2 & 21.3 & 23.5 & 17.6 \\
\hline
30 & 8 & 1.8 & 180.0 & 411.7 & 345.4 \\
30 & 8 & 2.2 & 512.7 & 711.2 & 718.3 \\
\hline
35 & 9 & 1.8 & 256.4 & 637.3 & 573.0 \\
35 & 9 & 2.2 & 751.9 & 989.7 & 661.9 \\
\hline
40 & 10 & 1.8 & 351.5 & 889.0 & 733.8 \\
40 & 10 & 2.2 & 868.4 & 1060.4 & 526.4 \\
\hline
100 & 15 & 1.8 & 760.3 & 693.0 & 693.0 \\
100 & 15 & 2.2 & 697.7 & 644.0 & -
\end{tabular}
\caption{Average numbers of tuples produced in column generation.}\label{tab:4}
\end{center}
\end{table}

To summarize the experimental results, we see that small-scale problems can be solved using the IP formulation in little time, and heuristic methods are not required in these cases. Starting with $n=30$, heuristic methods become important. By using lookahead and post-optimization, the standard greedy method can be outperformed by around $10\%$ on average, while still retaining small computation times. These improvements can be further strengthened by using the column generation approach, at the cost of higher computation times. Finally, for large-scale problems, the post-optimization step becomes the most important tool to improve greedy while still keeping computation times small.

\section{Conclusions}
\label{sec:conclusions}

We considered the multi-level bottleneck assignment problem, which has attracted considerable attraction in application areas such as finance and scheduling. While previous models allowed that any element in one set can be paired with any other element in the next set, we generalized this setting such that any bipartite graph can represent the feasible pairings between two consecutive columns. We analyzed the complexity of this problem, showing that it is not approximable better than $(\lfloor\frac{m}{3}\rfloor+1)$ in polynomial time. 

A prominent solution method for the complete MBA is a greedy approach, where solutions are built by iterative solving a bottleneck assignment problem. We transferred this method to the arbitrary MBA and further improved it using lookahead and post-optimization. We also introduced alternative methods based on solving an IP formulation, and using a column generation approach. In computational experiments we found that the standard greedy approach can be outperformed considerably.

\end{document}